\providecommand{\MoveEqLeft}{}
\newlength\PullBackLength
\renewcommand\MoveEqLeft[1][2]{%
  \setlength{\global\PullBackLength}{#1em}%
  \kern\PullBackLength%
  & 
  \kern-\PullBackLength}
\let\epsilon\varepsilon
\let\phi\varphi
\let\theta\vartheta
\DeclareMathOperator{\ad}{ad}
\DeclareMathOperator*{\slim}{s-\!\lim}
\DeclareMathOperator*{\wlim}{w-\!\lim}
\DeclareMathOperator{\w}{w-\!}
\DeclareMathOperator{\uP}{P}
\DeclareMathOperator{\Ran}{Ran}
\newcommand{\ri}{\mathrm{i}}
\newcommand{\pp}{\mathrm{pp}}
\providecommand{\C}{\mathbb{C}} \providecommand{\N}{\mathbb{N}}
\providecommand{\R}{\mathbb{R}} \providecommand{\cH}{\mathcal{H}}
\providecommand{\cD}{\mathcal{D}} \providecommand{\cB}{\mathcal{B}}
 \providecommand{\cS}{\mathcal{S}}
\providecommand{\cK}{\mathcal{K}} \providecommand{\cO}{\mathcal{O}}
\newcommand{\inner}[3][]{#1\langle #2,#3 #1\rangle}
\newcommand{\bra}[2][]{#1\langle #2\rvert}
\newcommand{\ket}[2][]{#1\lvert #2\rangle}
\newcommand{\bbbone}{\mathbbm{1}}
\newtheorem{condition}{Condition}
\journalname{Letters in Mathematical Physics}
\providecommand{\kdj}{\delta_j}
\DeclareMathOperator{\adjungeret}{ad}
\DeclareMathOperator{\ud}{d\!}
\providecommand{\ad}[2]{\adjungeret_{#1}^{#2}}
\providecommand{\abs}[2][]{#1\lvert#2#1\rvert}
\providecommand{\norm}[2][]{#1\lVert#2#1\rVert}
\providecommand{\jnorm}[2][]{#1\langle#2#1\rangle}
\providecommand{\inv}{^{-1}}
\providecommand{\ext}{^{\mathrm{ext}}}
\newcommand{\bd}{\mathbf{d}}
\newcommand{\bnd}{\mathrm{bd}}
\newcommand{\bD}{\mathbf{D}}
\DeclareMathOperator{\hc}{h.c.}
\begin{document}

\title{Asymptotic Completeness in Quantum Field Theory: Translation
  Invariant Nelson Type Models Restricted to the Vacuum and
  One-Particle Sectors}


\titlerunning{Translation Invariant Nelson Type Models Restricted to
  the Vacuum and One-Particle Sectors} 

\author{Christian G\'erard \and Jacob Schach M\o{}ller \and Morten
  Grud Rasmussen}

\authorrunning{G\'erard, M\o{}ller, Rasmussen} 

\institute{Christian G{\'e}rard 
  \at D\'epartement de Math\'ematiques, Universit\'e Paris-Sud,
  B\^at. 425, F-91405 Orsay Cedex, France\\
  \email{christian.gerard@math.u-psud.fr}
  \and
  Jacob Schach M\o{}ller 
  \and Morten Grud Rasmussen 
  \at Department of Mathematical Sciences, Aarhus University,
  Bldg. 1530, DK-8000 Aarhus C, Denmark\\
  \email{jacob@imf.au.dk}\and
  \email{mgr@imf.au.dk}
}

\date{Received: date / Accepted: date}

\maketitle

\begin{abstract}
  Time-dependent scattering theory for a large class of translation
  invariant models, including the Nelson and Polaron models,
  restricted to the vacuum and one-particle sectors is
  studied. We formulate and prove asymptotic completeness for these models. The
  translation invariance imply that the Hamiltonians considered are fibered with
  respect to the total momentum. On the way to asymptotic completeness
  we determine the spectral structure of the fiber Hamiltonians,
  establish a Mourre estimate and derive a geometric asymptotic
  completeness statement as an intermediate step.\newline

  \noindent{\bfseries Keywords:} quantum field theory, time-dependent
  scattering theory, asymptotic completeness, translation
  invariance\newline {\bfseries Mathematics Subject Classification
    (2010):} 81Q10, 47A40, 81T10, 81U30
\end{abstract}

\tableofcontents

\section{Introduction and motivation}\label{sec:intro}

In this paper, we study the spectral and scattering theory of a class
of Hamiltonians that arise when one restricts e.g. the Nelson or
Polaron model to the subspace of at most one field particle. As our
results are valid for both models, we will use the term ``field
particles'' rather than photons or phonons, and in the same spirit, we
will use the term ``matter particle'' rather than electron or
positron.

In \cite{MGRJSM}, two of the authors prove a Mourre estimate and $C^2$
regularity for the full model, with respect to a suitably chosen
conjugate operator. The estimate holds in the part of the
energy-momentum spectrum lying between the bottom of the essential
energy-momentum spectrum and either the two-body threshold, if there
are no exited isolated mass shells, or the one-body threshold
pertaining to the first exited isolated mass shell, if it exists. This
is a natural first step for scattering theory. As the full model in
that energy-momentum regime is expected to resemble the model with at
most one field particle in many aspects, the scattering theory of the
cut-off model is of obvious interest. We note that in \cite{GJY}, the
spectral and scattering theory of the massless Nelson model is
studied. The stationary methods used there to prove
asymptotic completeness would to some
extend also work on the class of models considered here. However, the
scattering theory in \cite{GJY} is obtained via a Kato-Birman argument
which one cannot hope to work on the full model. The present paper should be seen as a
test case for the application of the time-dependent methods from
\cite{DGe} to translation invariant models.

In recent years a lot of effort was put into investigating the
spectral and scattering theory of various models of quantum field
theory (see among many other papers \cite{Am}, \cite{AMiZ},
\cite{DGe}, \cite{FGrSch}, \cite{FGrS}, \cite{Ge}, \cite{P}, \cite{Sp}
and references therein). Substantial progress was made by applying
methods originally developed in the study of $N$-particle
Schr\"o{}dinger operators namely the Mourre positive commutator method
and the method of propagation observables to study the behavior of the
unitary group $e^{-itH}$ for large times. Up to now, the most complete
results on the scattering theory for these models have only been
available for models where the translation invariance is broken
\cite{Am}, \cite{DGe}, \cite{Ge}, \cite{P}, \cite{Sp}, or for small
coupling constants \cite{FGrSch}. In fact the only asymptotic
completeness result valid for arbitrary coupling strength, in
time-dependent scattering theory of translation invariant models known
to us are variations of the $N$-body problem, where the dispersion
relations are of the non-relativistic form $\frac{p^2}{M}$. Our
results hold for a large class of dispersion relations, including a
combination of the relativistic and non-relativistic choices.

In order to appreciate the difficulties associated with proving
asymptotic completeness for translation invariant models of QFT, we
explain the structure of scattering channels.  If a system starts in a
scattering state at total momentum $\xi$ and energy $E$, it will emit
field particles with momenta $k_1,\dots,k_n$ until the remaining
interacting system reaches a total momentum $\xi'$ and an eigenvalue
$E'(\xi')$ for the Hamiltonian at total momentum $\xi'$.  In order to
conserve energy and momentum we must have $\xi = \xi'+k_1+\cdots +
k_n$ and $E= E'(\xi') + \omega(k_1)+\cdots +\omega(k_n)$, where
$\omega$ is the dispersion relation for the field.

That is, the scattering channels are labeled by bound states at
momenta $\xi'$ and the number of emitted field particles $n$, under
the constraint of conservation of energy and total momentum.  The
resulting bound particle will not be at rest but rather move according
to a dispersion relation which is in fact the eigenvalue band, or mass
shell, to which it belongs.  This band may a priori be an isolated
mass shell or an embedded one.  If one wants to capture the behaviour
of scattering states through a Mourre estimate, then one needs to
build into a conjugate operator the dynamics of all the mass shells
that appear in the available channels. This is a difficult task.  The
thresholds at total momentum $\xi$ are energies $E$ that has a
scattering channel with the property that the bound state and the
emitted field particles do not separate over time.

When introducing a number cutoff in the model, one simplifies the
situation in that the scattering channels are now labeled by bound
states of Hamiltonians with strictly fewer field particles. In
particular in our case, we can label the scattering channels by mass
shells of the Hamiltonian on the vacuum sector, which are easily
understood. Indeed, there is in fact only one mass shell and it is
identical to the matter dispersion relation $\Omega$.

Finally, we will briefly outline the contents of this paper. In
Section~\ref{sec:model} we introduce the model in details and state
our main result on asymptotic completeness. In
Section~\ref{sec:specanal} we briefly go through the spectral theory
for the fiber Hamiltonians, in particular we prove an HVZ theorem, a
Mourre estimate, absence of singular continuous spectrum and a
semi-continuity of the Mourre estimate. In
Section~\ref{sec:propest} we prove the following propagation
estimates: A large velocity estimate, a phase-space propagation
estimate, an improved phase-space propagation estimate and a minimal
velocity estimate. These form the technical foundation for
Section~\ref{sec:asympobs}, where we begin by introducing a key asymptotic
observable, which gives rise to spaces of asymptotically bound resp. free particles.
Finally we construct wave operators and prove asymptotic completeness
via a so-called geometric asymptotic completeness result.

\section{The model and the result}\label{sec:model}

The Hilbert space for the Hamiltonian is
\begin{equation*}
\cH=L^2(\R^\nu,\ud y)\otimes (\C\oplus L^2(\R^\nu,\ud
x))=L^2(\R^\nu,\ud y)\oplus L^2(\R^{2\nu},\ud x\ud y),
\end{equation*}
where $\nu\in\N$. We write $D_x=-\ri \nabla_x$, $D_y=-\ri \nabla_y$ for the
respective momentum operators. The Hamiltonian we wish to study the
spectral and scattering theory of is given by
\begin{equation*}
  H=H_0+V=\begin{pmatrix}
    \Omega(D_y)&0\\0&\Omega(D_y)+\omega(D_x)
  \end{pmatrix}+
  \begin{pmatrix}
    0&v^*\\v&0
  \end{pmatrix},
\end{equation*}
where 
\begin{equation*}
  (vu_0)(x,y)=\rho(x-y)u_0(y)\quad\text{and}\quad(v^*u_1)(x)=\int\rho(x-y)u_1(x,y)dy
\end{equation*}
for some $\rho\in L^2(\R^\nu)$. Here $\Omega$ is the dispersion
relation for the matter particle, $\omega$ the dispersion relation for
the field particles and $\rho$ a coupling function. One may view it as
the translation invariant Nelson or Polaron model restricted to the
subspace with at most one field particle, depending on the choice of
dispersion relations.

The coupling function will be assumed to satisfy a short-range
condition which implies a UV-cutoff (see Condition~\ref{cond:rho}). We
work with more general dispersion relations $\omega$ and $\Omega$ than
$\omega(k)=\sqrt{k^2+m^2}$ or $\omega(k)=\omega_0>0$ and
$\Omega(\eta)=\eta^2/2M$ respectively (see Conditions~\ref{cond:Omega}
and \ref{cond:omega} for details).  As the infrared problem is not
present in this model due to the finite number of field particles, the
mass of the field particle is not important. However, the singular
behavior of the dispersion relation $\omega(k)=\abs{k}$ at $k=0$ makes
this choice fall outside of what can be handled in this treatment,
although it seems likely that one with minor adjustments may include
this case in the same framework. For a treatment of the case where
$\Omega(\eta)=\frac{1}{2}\eta^2$ and $\omega(k)=\abs{k}$, see
\cite{GJY}.

The operator $H$ commutes with the operator of total momentum,
\begin{equation*}
  \uP\!=\begin{pmatrix}
    D_y&0\\0&D_x+D_y
  \end{pmatrix},
\end{equation*}
and hence $H$ is fibered, $H=U\inv\int_{\R^\nu}^\oplus
H(P)\ud PU$, where 
\begin{equation*}
  U(u_0,u_1)(x,y)=(u_0(y),u_1(y,x+y))
\end{equation*}
and
\begin{equation*}
  H(P)=H_0(P)+\tilde V=\begin{pmatrix}\Omega(P)&0\\
    0&\Omega(P-D_x)+\omega(D_x)
  \end{pmatrix}+
  \begin{pmatrix}
    0&\bra{\rho}\\
\ket{\rho}&0
  \end{pmatrix},
\end{equation*}
where $\bra{\cdot}$ and $\ket{\cdot}$ denote the Dirac brackets. The
fiber Hamiltonians are operators on the Hilbert space $\cK=\C\oplus
L^2(\R^\nu)$.

The precise assumptions on $\Omega$, $\omega$ and $\rho$ are given
below. We adopt the standard notation
$\jnorm{x}=(1+{x}^2)^{\frac{1}{2}}$.

\begin{condition}[Matter particle dispersion relation]\label{cond:Omega}
  Let $\Omega\in C^\infty(\R^\nu)$ be a non-ne\-ga\-ti\-ve, real-analytic
  and rotation invariant\footnote{By rotation invariance of a function
    $f$ we mean that $f(\eta)=f(O\eta)$ a.e. for any $O\in O(\nu)$
    where $O(\nu)$ denotes the $\nu$-dimensional orthogonal group.}
  function. There exists $s_\Omega\in[0,2]$ such that $\Omega$
  satisfies:
  \begin{enumerate}[(i)]
  \item There is a $C>0$ such that $\Omega(\eta)\ge
    C\inv\jnorm{\eta}^{s_\Omega}-C$.
  \item For any multi-index $\alpha$ there is a $C_\alpha>0$
    such that $\abs{\partial^\alpha\Omega(\eta)}\le
    C_\alpha\jnorm{\eta}^{s_\Omega-\abs{\alpha}}$.
  \end{enumerate}
\end{condition}

Note that this assumption is satisfied by the standard
non-relativistic and relativistic choices,
$\Omega(\eta)=\frac{\eta^2}{2M}$ and $\Omega(\eta)=\sqrt{\eta^2+M^2}$.

\begin{condition}[Field particle dispersion relation]\label{cond:omega}
  Let $\omega\in C^\infty(\R^\nu)$ be non-ne\-ga\-ti\-ve, real-analytic,
  rotation invariant and satisfy:
  \begin{enumerate}[(i)]
  \item For any multi-index $\alpha$ with $\abs{\alpha}\ge1$, we have
    $\sup_{k\in\R^\nu}\abs{\partial^\alpha\omega(k)}<\infty.$
  \item If $s_\Omega=0$, then $\omega(k)\to\infty$ as
    $\abs{k}\to\infty$.
  \end{enumerate}
\end{condition} 

This is satisfied e.g. for $\omega(k)=\sqrt{k^2+m^2}$,
$m\neq0$, and also for the semi-relativistic and non-relativistic
large polaron models, where $\omega(k)=\omega_0$.

\begin{condition}[Coupling function]\label{cond:rho}
  Let $\rho\in L^2(\R^\nu)$ be rotation invariant and satisfy that
  \begin{enumerate}[(i)]
  \item $\hat\rho\in C^2(\R^\nu)$.
  \item
    $\jnorm{\cdot}\abs{\nabla\hat\rho},\jnorm{\cdot}\norm{\nabla^2\hat\rho}\in
    L^2(\R^\nu)$.
  \item\label{cond:item:SR} There exist constants $C,\mu>0$ such that $\abs{\rho(x)}\le
    C\jnorm{x}^{-1-\frac{\nu}{2}-\mu}$.
  \end{enumerate}
\end{condition}

Condition~\ref{cond:rho}~\eqref{cond:item:SR} is the so-called
short-range condition. Note that it implies that for $J\in
C^\infty(\R^\nu)$ with support away from $0$, we have
\begin{equation}
  \label{eq:SR}
  \norm{J(\tfrac{x}{t})\rho}=O(t^{-1-\mu}).
\end{equation}
For the rest of this paper,
Conditions~\ref{cond:Omega}, \ref{cond:omega} and \ref{cond:rho} will
tacitly be assumed to be fulfilled, and under this assumption, our
main result will be the following

\begin{theorem}[Asymptotic completeness]\label{thm:main}
  The wave operator 
  \begin{equation*}
    W^+=\slim_{t\to\infty}e^{\ri tH}e^{-\ri tH_0}P^+(H_0)
  \end{equation*}
  exists, where $P^+(H_0)$ is the projection onto $\{0\}\oplus
  L^2(\R^{2\nu})$, and the system is asymptotically complete: 
  \begin{equation*}
    \Ran W^+=\cH_\bnd^\perp, 
  \end{equation*}
  where $\cH_\bnd=U\inv\int_{\R^\nu}^\oplus \bbbone_{\mathrm{pp}}(H(P))\ud PU\cH$.
\end{theorem}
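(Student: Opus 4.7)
My plan follows the time-dependent scheme of~\cite{DGe}, using the propagation estimates of Section~\ref{sec:propest} and the spectral analysis of Section~\ref{sec:specanal}. For the existence of $W^+$, I would apply Cook's criterion. On $\Ran P^+(H_0)$, identified with $L^2(\R^{2\nu})$, the free evolution acts as the Fourier multiplier $e^{-it(\Omega(\eta)+\omega(k))}$. For $\psi$ whose partial Fourier transform is compactly supported away from the critical sets of $\Omega$ and $\omega$, a standard non-stationary phase argument yields rapid decay of $e^{-itH_0}\psi(x,y)$ away from the classical trajectory region; combined with the short-range bound~\eqref{eq:SR}, this gives $\|V e^{-itH_0}\psi\| = O(t^{-1-\mu})$, which is integrable in $t$. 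Density of such $\psi$ then yields existence.

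For asymptotic completeness, I would introduce an asymptotic observable. Let $J\in C_0^\infty(\R^\nu)$ equal $1$ near $0$, let $\chi\in C_0^\infty(\R)$ localize energy below the relevant threshold, and denote by $\widehat J(t)$ the operator acting as $J(x/t)$ on the one-particle sector and as zero on the vacuum sector. One would show
\begin{equation*}
M^+ = \slim_{t\to\infty} e^{\ri tH}\widehat J(t) e^{-\ri tH}\chi(H)
\end{equation*}
exists, by proving that its Heisenberg derivative is integrable along the evolution using the phase-space and minimal velocity estimates of Section~\ref{sec:propest}. The Mourre estimate and absence of singular continuous spectrum from Section~\ref{sec:specanal} then identify $\Ran M^+ = \cH_\bnd\cap\Ran\chi(H)$: states in $\cH_\bnd$ remain localized near $x=0$ in the field coordinate and are preserved by $M^+$, while states in $\cH_\bnd^\perp$ are driven to infinity by the minimal velocity estimate and annihilated.

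The hardest step is geometric asymptotic completeness, namely that every $\psi\in\cH_\bnd^\perp\cap\Ran\chi(H)$ lies in $\Ran W^+$. I would use a partition $1 = J_0(x/t)^2 + J_\infty(x/t)^2$ with $J_0$ supported near $0$ and $J_\infty$ supported where $|q|\ge c>0$. The $J_0$-contribution vanishes on $\cH_\bnd^\perp\cap\Ran\chi(H)$ by the previous paragraph. On the support of $J_\infty(x/t)$ the coupling $\rho(x-y)$ is negligible by~\eqref{eq:SR}, so the Heisenberg derivatives of $J_\infty(x/t)$ with respect to $H$ and with respect to $H_0$ differ only by an integrable term. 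This gives existence of the inverse wave operator
\begin{equation*}
\widetilde W^+ = \slim_{t\to\infty} e^{\ri tH_0} J_\infty(x/t) e^{-\ri tH}\chi(H),
\end{equation*}
and $\psi = W^+ \widetilde W^+\psi \in \Ran W^+$. An energy-cutoff exhaustion removes the $\chi(H)$ to give $\Ran W^+ \supseteq \cH_\bnd^\perp$; the reverse inclusion follows by intertwining, since $H_0$ restricted to $\Ran P^+(H_0)$ has purely absolutely continuous spectrum under the present conditions on $\Omega$ and $\omega$. The main obstacle will be the identification of $\Ran M^+$ with $\cH_\bnd\cap\Ran\chi(H)$: the bound states of $H(P)$ depend on $P$ and can touch the essential spectrum at thresholds, so this step requires delicate use of the semi-continuity of the Mourre estimate to obtain the identification uniformly in $P$.
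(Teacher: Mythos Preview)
Your overall strategy matches the paper's---build an asymptotic observable from the propagation estimates, identify its range via the minimal velocity estimate, then deduce geometric asymptotic completeness---but there is a concrete error in your setup of $\widehat J(t)$. You let $\widehat J(t)$ act as zero on the vacuum sector, i.e.\ $\widehat J(t) = [J(\tfrac{x-y}{t})]$ in the paper's bracket notation. The commutator with the coupling is then
\[
[V,\ri \widehat J(t)] = \ri\begin{pmatrix}0 & v^*J(\tfrac{x-y}{t})\\ -J(\tfrac{x-y}{t})v & 0\end{pmatrix},
\]
and since $J=1$ near the origin, $\|J(\tfrac{x-y}{t})v\|$ does not decay in $t$. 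Hence the Heisenberg derivative of $\widehat J(t)$ is \emph{not} integrable along the evolution, and your argument for the existence of $M^+$ fails as written. The remedy is to let $\widehat J(t)$ act as the identity on the vacuum sector, so that $\widehat J(t)=1-[1-J(\tfrac{x-y}{t})]$; then $1-J$ is supported away from the origin and the short-range bound~\eqref{eq:SR} gives $[V,\ri\widehat J(t)]=O(t^{-1-\mu})$. This choice is also forced by the requirement that $M^+$ fix bound states, since eigenvectors of $H(P)$ generically have nonzero vacuum component.

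The paper sidesteps this by building its asymptotic observable with the complementary localization: $[p_\delta(\tfrac{x-y}{t})]$ with $p_\delta$ supported \emph{away} from the origin and acting as zero on the vacuum sector, which yields directly the projection $P^+_0(H)$ onto scattering states. Both the existence of $W^+$ and geometric asymptotic completeness are then read off from short algebraic identities among the four limits $P^+_\delta(H)$, $P^+_\delta(H_0)$, $P^+_\delta(H_0,H)$, $P^+_\delta(H,H_0)$ (Theorems~\ref{thm:waveop} and~\ref{thm:geom}), with no separate Cook argument and no explicit inverse wave operator. Two smaller points on your version: the dense set in your Cook argument should avoid $\{(\eta,k):\nabla\omega(k)=\nabla\Omega(\eta)\}$ rather than the individual critical sets of $\Omega$ and $\omega$; and your cutoff must be joint in $(\uP,H)$, not in $H$ alone, since Theorem~\ref{thm:minvelest} is local in total momentum as well as energy.
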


\begin{remark}That $P\mapsto\bbbone_{\mathrm{pp}}(H(P))$ is weakly --
  and hence strongly -- measurable follows from an application of the
  RAGE theorem, \cite[Theorem~5.8]{CFKS}, see the proof of
  \cite[Theorem~9.4]{CFKS} for details.
\end{remark}

\section{Spectral analysis}

\label{sec:specanal}
We begin by recalling the following well-known properties of the
fibered Hamiltonian. The Hamiltonian $H_0(P)$ is essentially
self-adjoint on $\C\oplus C_0^\infty(\R^\nu)$ and the domain
$\cD=\cD(H_0(P))$ is independent of $P$. As $\tilde V$ is bounded, the
Kato-Rellich theorem implies that the same is true for $H(P)$ and that
$\cD(H(P))=\cD$.

The following threshold set will play an important role in our
analysis:
\begin{equation*}
  \theta(P)=\bigl\{\lambda\in\R\,\big|\,\exists
  k\in\R^\nu:\lambda=\Sigma(P-k)+\omega(k),
  \nabla\Omega(P-k)=\nabla\omega(k)\bigr\}.
\end{equation*}
The energies $E$ comprising $\theta(P)$ are those for which
interacting states sharply
localized at energy $E$, may decay into a boson and a free particle
that do not break up over time. That is, emitted bosons, at threshold
energies,  may not escape the interaction region. Clearly $\theta(P)$ only depends on $P$
up to rotations. It is essential for our analysis that $\theta(P)$ 
is a closed set of measure zero, in fact it is locally finite. 
This follows from real analyticity and rotation invariance of the functions $\omega$ and
$\Omega$. A similar argument played a role in \cite{MProc}.

The following results, Theorems~\ref{thm:C2} to \ref{thm:Mourre},
correspond to completely analogous statements for the full model, see
\cite{MGRJSM}.  When $H$ is of class $C^1(A)$, we denote by $[H,iA]^\circ$ the unique extension of the commutator
form $[H,iA]$ defined on $\cD(A)\cap\cD(H)$ to an element of $\cB(\cD(H);\cD(H)^*)$. See
Appendix~\ref{app:B} for the definition of the $C^k(A)$, $k\in\mathbb{N}$, classes.

\begin{theorem}\label{thm:C2}Assume that the vector field $v_P\in C^\infty(\R^\nu;\R^\nu)$
  satisfies that for any multi-index $\alpha$,
  $\abs{\alpha}\in\{0,1,2\}$, there is a constant $C_\alpha>0$ such
  that $\abs{\partial^\alpha v_P(\eta)}\le
  C_\alpha\jnorm{\eta}^{1-\abs{\alpha}}$.  Then the operator
  $a_P=\frac{1}{2}(v_P(D_x)\cdot x+x\cdot v_P(D_x))$ is essentially
  self-adjoint on the Schwarz space $\cS$ and $H(P)$ is of class
  $C^2(A_P)$, where $A_P=(\begin{smallmatrix}0&0\\0&a_P
  \end{smallmatrix})$ is self-adjoint on $\cD(A_P)$. The first
  commutator is given by 
  \begin{equation*}
    [H(P),\ri A_P]^\circ=
    \begin{pmatrix}
      0&\bigl\langle \ri a_P\rho \bigr|\\
      \left|\ri a_P\rho \right\rangle&v_P(D_x)\cdot\nabla(\omega(D_x)+\Omega(P-D_x))
    \end{pmatrix}
  \end{equation*}
  as a form on $\cD$.
\end{theorem}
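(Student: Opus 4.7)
The plan is to establish the three claims in turn, following the pattern developed for the full model in \cite{MGRJSM}. For \emph{essential self-adjointness} of $a_P$ on $\cS$, I would pass to the momentum representation, where $a_P$ becomes the symmetric first-order differential operator $i(v_P(k)\cdot\nabla_k + \tfrac{1}{2}\mathrm{div}\,v_P(k))$. Since $|v_P(\eta)| \leq C\jnorm{\eta}$, the vector field $v_P$ is complete on $\R^\nu$, and smoothness of $v_P$ ensures that its flow $\phi_t$ preserves $\cS$, as does the weighted pullback $(U_t\hat u)(k) = |\det D\phi_{-t}(k)|^{1/2}\hat u(\phi_{-t}(k))$. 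The strongly continuous unitary group $(U_t)_{t\in\R}$ has generator agreeing with $a_P$ on $\cS$, so by Stone's theorem together with $\cS$-invariance, $a_P$ is essentially self-adjoint on $\cS$.

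For \emph{the commutator formula} on $\cD$, since $\C\oplus\cS$ is a common core for $H(P)$ and $A_P$, I would compute $[H(P),iA_P]$ entry by entry there. The diagonal contribution is $[a_P, i(\omega(D_x)+\Omega(P-D_x))]$, which via the intertwining $e^{ita_P}f(D_x)e^{-ita_P} = f(\phi_t(D_x))$ differentiates at $t=0$ to $v_P(D_x)\cdot\nabla(\omega+\Omega(P-\cdot))(D_x)$. The off-diagonal entries reduce to $\bra{ia_P\rho}$ and $\ket{ia_P\rho}$; these are well defined because, in momentum space, $a_P\hat\rho = i(v_P\cdot\nabla\hat\rho + \tfrac{1}{2}\mathrm{div}\,v_P\,\hat\rho)$ lies in $L^2$ by Condition~\ref{cond:rho}(i)--(ii) and the linear growth of $v_P$. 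A density argument extends the form identity from $\C\oplus\cS$ to $\cD$.

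For \emph{$C^2(A_P)$ regularity}, by the criterion recalled in Appendix~\ref{app:B} it suffices to verify that both $[H(P),iA_P]$ and its iterated commutator $[[H(P),iA_P],iA_P]$, defined on $\cD\cap\cD(A_P)$, extend to elements of $\cB(\cD;\cD^*)$. For the first commutator, the diagonal entry is multiplication by a function with growth $O(\jnorm{k}^{s_\Omega})$ by Conditions~\ref{cond:Omega}(ii) and \ref{cond:omega}(i), hence $\cD$-bounded via Condition~\ref{cond:Omega}(i), while the off-diagonal entries are bounded since $a_P\rho\in L^2$. Iterating the computation yields analogous pieces in the second commutator: on the diagonal, a Hessian-type expression in $v_P$ acting on $\omega+\Omega(P-\cdot)$, and off-diagonal, the terms $\bra{(ia_P)^2\rho}$ and $\ket{(ia_P)^2\rho}$. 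The required bounds follow from the derivative hypotheses on $v_P$ for $|\alpha|\leq 2$ combined with the weighted $L^2$ estimates in Condition~\ref{cond:rho}(ii).

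The main technical obstacle will be the second-commutator bookkeeping: one must verify that the permitted growth of $v_P$ and its derivatives is exactly compensated by the decay assumed on $\hat\rho$ to place $a_P^2\rho$ in $L^2$, and that the diagonal multiplier remains $\cD\to\cD^*$ bounded through the interplay between the $s_\Omega$-growth control of $\Omega$ and the bounded-derivative hypotheses on $\omega$. The essential self-adjointness step and the first-commutator identity are by contrast routine consequences of the flow and symbol calculus.
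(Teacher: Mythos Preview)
Your proposal is correct and aligns with the paper's approach: the paper gives no detailed proof, stating only that the result ``can be seen either by direct computations or by following \cite{MGRJSM},'' and your outline is precisely such a direct computation carried out along the lines of \cite{MGRJSM} (flow method for essential self-adjointness, entry-by-entry commutator calculation, and verification of the $\cB(\cD;\cD^*)$ bounds for the first and second commutators). Your identification of the second-commutator bookkeeping as the main technical point is apt; in particular, showing that $(ia_P)^2\rho$ lies in the right space uses that the growth of $v_P$ is compensated by the decay of $(F_P(D_x)+1)^{-1}$ together with Condition~\ref{cond:rho}(ii), rather than requiring $a_P^2\rho\in L^2$ outright.
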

This can be seen either by direct computations or by following
\cite{MGRJSM}.

We now introduce the extended space $\cK\ext=\cK\oplus L^2(\R^\nu)$ to
be able to make a geometric partition of unity in configuration
space. The partition of unity is similar to what is done in the
analysis of the $N$-body Schr\"o{}dinger operator (see
e.g. \cite{DGeBook}) and in complete analogy with what is done in
e.g. \cite{DGe} and \cite{MAHP}. The partition of unity used here may
actually be seen as the partition of unity introduced in \cite{DGe}
restricted to the subspace with at most $1$ field particle.

Let $j_0,j_\infty\in C^\infty(\R^\nu)$ be real, non-negative functions
satisfying $j_0=1$ on $\{x\,|\,\abs{x}\le 1\}$, $j_0=0$ on
$\{x\,|\,\abs{x}>2\}$ and $j_0^2+j_\infty^2=1$. We now define
\begin{align*}
  j^R\colon\cK&\to\cK\ext\\
  j^R(v_0,v_1)&=(v_0,j_0(\tfrac{\cdot}{R})v_1)\oplus(j_\infty(\tfrac{\cdot}{R})v_1).
\end{align*}
Clearly, $j^R$ is isometric. 

We introduce two self-adjoint operators, the extended Hamiltonian,
$H\ext(P)$, and the extended conjugate operator, $A_P\ext$, acting in
$\cK\ext$,
\begin{align*}
  H\ext(P)&=H(P)\oplus F_P(D_x)\text{ and}\\
  A_P\ext&=A_P\oplus a_P,
\end{align*}
where $F_P(D_x)=\omega(D_x)+\Omega(P-D_x)$, with the obvious domains
denoted by $\cD\ext$ and $\cD(A_p\ext)$. The extended Hamiltonian
describes an interacting system together with a free field
particle. It is easy to see that Theorem~\ref{thm:C2} holds true with
$H(P)$ and $A_P$ replaced by $H\ext(P)$ and $A_P\ext$, respectively,
and the commutator equal to
\begin{equation*}
  [H\ext(P),\ri A_p\ext]^\circ=[H(P),\ri A_P]^\circ\oplus \bigl(v_P(D_x)\cdot(\nabla\omega(D_x)-\nabla\Omega(P-D_x))\bigr).
\end{equation*}
We have the following localisation error when applying $j^R$.

\begin{lemma}\label{lem:locerror} Let $f\in C_0^\infty(\R)$. Then
  \begin{align*}
    \MoveEqLeft j^Rf(H(P))=f(H\ext(P))j^R+o_R(1)\quad\text{and}\\
    \MoveEqLeft j^Rf(H(P))[H(P),\ri A_P]^\circ
    f(H(P))\\
    &=f(H\ext(P))[H\ext(P),\ri A_P\ext]^\circ f(H\ext(P))j^R+o_R(1),
  \end{align*} 
  for $R\to\infty$.
\end{lemma}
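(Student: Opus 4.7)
The plan is to use the Helffer--Sj\"ostrand functional calculus together with a direct computation of the localization error. Let $\tilde f\in C_0^\infty(\C)$ be an almost analytic extension of $f$ and set
\begin{equation*}
  B_R := H\ext(P)j^R-j^R H(P)\quad\text{as a form on }\cD.
\end{equation*}
The resolvent identity yields
\begin{equation*}
  j^R f(H(P))-f(H\ext(P))j^R = -\frac{1}{\pi}\int_\C \bar\partial \tilde f(z)\,(z-H\ext(P))\inv B_R(z-H(P))\inv\,dL(z),
\end{equation*}
where $L$ denotes Lebesgue measure on $\C$. Since the vanishing of $|\bar\partial \tilde f(z)|$ near the real axis compensates the $|\Im z|\inv$ singularities of the resolvents, the first identity reduces to proving $B_R(H(P)+\ri)\inv = o_R(1)$ in the operator norm of $\cB(\cK,\cK\ext)$.

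A direct componentwise computation decomposes $B_R$ into two kinds of contributions. The off-diagonal $\tilde V$ produces \emph{interaction terms} of the form $\bra{(j_0(\cdot/R)-1)\rho}$, $(1-j_0(\cdot/R))\rho$ and $-j_\infty(\cdot/R)\rho$ (the latter two as multiplication operators from $\C$ into $L^2(\R^\nu)$), each of which has norm $o_R(1)$ by dominated convergence since $\rho\in L^2(\R^\nu)$ and $j_0(x/R)\to 1$, $j_\infty(x/R)\to 0$ pointwise. The diagonal part produces \emph{kinetic commutators} $[F_P(D_x),j_i(\cdot/R)]$ for $i\in\{0,\infty\}$, where $F_P(\eta) := \omega(\eta)+\Omega(P-\eta)$. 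A standard symbolic-calculus expansion gives the leading contribution $R\inv(\nabla F_P)(D_x)\cdot(\nabla j_i)(\cdot/R)$ with an $O(R^{-2})$ remainder, whose symbol estimates are controlled by Conditions~\ref{cond:Omega}\,(ii) and \ref{cond:omega}\,(i). Since the range of $(H(P)+\ri)\inv$ lies in the form domain of $F_P(D_x)$ on the field component and $|\nabla F_P(\eta)|\le C(1+F_P(\eta))^{1/2}$ under Condition~\ref{cond:Omega}\,(ii), the product $[F_P(D_x),j_i(\cdot/R)](H(P)+\ri)\inv$ is $O(R\inv)$ in norm.

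For the second identity, applying the Helffer--Sj\"ostrand formula to both factors $f(H(P))$ and telescoping the integrand algebraically reduces the problem, modulo the first identity, to controlling the \emph{commutator boundary operator}
\begin{equation*}
  C_R := [H\ext(P),\ri A_P\ext]^\circ j^R - j^R[H(P),\ri A_P]^\circ
\end{equation*}
sandwiched between resolvents of $H(P)$ and $H\ext(P)$. Comparing the matrix forms from Theorem~\ref{thm:C2} and its $\ext$-analogue, $C_R$ has exactly the same structure as $B_R$ but with $\rho$ replaced by $\ri a_P\rho$ (which lies in $L^2(\R^\nu)$ by Condition~\ref{cond:rho}\,(ii) and the linear growth of $v_P$) and $F_P(D_x)$ replaced by the Fourier multiplier $v_P(D_x)\cdot\nabla F_P(D_x)$, which satisfies $|v_P(\eta)\cdot\nabla F_P(\eta)|\lesssim\jnorm{\eta}^{s_\Omega}$ and hence is $H(P)$-bounded. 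The same two mechanisms---dominated convergence for the interaction terms and $O(R\inv)$ symbolic calculus for the kinetic commutators---yield $(H\ext(P)+\ri)\inv C_R (H(P)+\ri)\inv = o_R(1)$, from which the second identity follows after integration against $\bar\partial\tilde f(z)\bar\partial\tilde f(w)$.

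The main obstacle is maintaining the $O(R\inv)$ kinetic estimate in the non-relativistic regime $s_\Omega=2$, where $\nabla F_P$ is itself unbounded. The resolution is that each dangerous kinetic commutator is always paired with a resolvent that absorbs the polynomial growth of $\nabla F_P$, at the cost of an additional bounded composition; a careful accounting shows that the nominal prefactor $R\inv$ survives as a genuine norm bound, and analogous bookkeeping handles $\nabla(v_P\cdot\nabla F_P)$ in the second identity.
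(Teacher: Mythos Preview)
Your proposal is correct and fleshes out precisely the ``direct computation'' that the paper invokes but does not spell out (the paper merely states that the lemma ``can be seen either by a direct computation or by applying \cite[Corollary~5.3]{MGRJSM}''). The Helffer--Sj\"ostrand reduction to the boundary operators $B_R$ and $C_R$, together with the dominated-convergence argument for the interaction terms and the $O(R^{-1})$ symbolic estimate for the kinetic commutators (with the resolvent absorbing the growth of $\nabla F_P$ and $\nabla(v_P\!\cdot\!\nabla F_P)$ when $s_\Omega>1$), is exactly the intended mechanism.
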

This can be seen either by a direct computation or by applying
\cite[Corollary~5.3]{MGRJSM}. The following two results, an HVZ
theorem and a Mourre estimate, are now almost immediate.

\begin{theorem}\label{thm:HVZ}
  The spectrum of $H(P)$ below
  $\Sigma_{\mathrm{ess}}(P)=\inf_{k\in\R^\nu}\{\Omega(P-k)+\omega(k)\}$
  consists at most of eigenvalues of finite multiplicity and can only
  accumulate at $\Sigma_{\mathrm{ess}}(P)$. The essential spectrum is
  given by
  $\sigma_{\mathrm{ess}}(H(P))=\left[\Sigma_\mathrm{ess}(P),\infty\right)$.
\end{theorem}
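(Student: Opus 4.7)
The plan is to observe that $\tilde V$ is a finite rank operator and to invoke Weyl's theorem on the stability of the essential spectrum under compact perturbations. For any $(c,u_1)\in\cK$ one computes
\begin{equation*}
  \tilde V(c,u_1) = \bigl(\inner{\rho}{u_1},\,c\rho\bigr),
\end{equation*}
so $\Ran\tilde V\subset\C\oplus\C\rho$ is at most two-dimensional; in particular $\tilde V$ is compact (indeed of rank at most two). Since $H(P)-H_0(P)=\tilde V$ and both $H(P)$ and $H_0(P)$ are self-adjoint on $\cD$, Weyl's theorem yields $\sigma_{\mathrm{ess}}(H(P))=\sigma_{\mathrm{ess}}(H_0(P))$. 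Any spectrum of $H(P)$ below $\Sigma_{\mathrm{ess}}(P)$ therefore belongs to the discrete spectrum, which by definition consists of eigenvalues of finite multiplicity that can accumulate only at $\Sigma_{\mathrm{ess}}(P)$, giving the first assertion.

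It then suffices to identify $\sigma_{\mathrm{ess}}(H_0(P))$ explicitly. Since $H_0(P)$ is block-diagonal, its spectrum is the union of the spectra of the two diagonal entries. The vacuum entry is the scalar $\Omega(P)$ acting on the one-dimensional space $\C$, and has no essential spectrum. The one-particle entry $\Omega(P-D_x)+\omega(D_x)$ is unitarily equivalent, via the Fourier transform, to multiplication on $L^2(\R^\nu)$ by the continuous function $F_P(k)=\Omega(P-k)+\omega(k)$. Under Conditions~\ref{cond:Omega}(i) and \ref{cond:omega}(ii) one has $F_P(k)\to\infty$ as $\abs{k}\to\infty$ (in the case $s_\Omega>0$ from coercivity of $\Omega$, in the case $s_\Omega=0$ from $\omega(k)\to\infty$), so $F_P$ attains its infimum $\Sigma_{\mathrm{ess}}(P)$ and, by continuity together with connectedness of $\R^\nu$, has range exactly $[\Sigma_{\mathrm{ess}}(P),\infty)$. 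The essential spectrum of such a multiplication operator coincides with the essential range of the multiplier, yielding $\sigma_{\mathrm{ess}}(H_0(P))=[\Sigma_{\mathrm{ess}}(P),\infty)$ as required.

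I do not anticipate any real obstacle: the proof is essentially a direct Weyl-type perturbation argument, made possible by the fact that the restriction to the vacuum and one-particle sectors forces the interaction $\tilde V$ to be finite rank. One could equally well follow the partition of unity pattern of \cite{MGRJSM}: for $f\in C_0^\infty((-\infty,\Sigma_{\mathrm{ess}}(P)))$ one has $f(F_P(D_x))=0$ since $F_P\geq\Sigma_{\mathrm{ess}}(P)$, so $f(H\ext(P))=f(H(P))\oplus 0$, and the isometry identity $(j^R)^* j^R=\bbbone$ combined with Lemma~\ref{lem:locerror} would reduce the question to compactness of a spatially localized operator; but that elaborate route is not needed here. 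The only point requiring a small check either way is that the infimum defining $\Sigma_{\mathrm{ess}}(P)$ is actually attained, ensuring that the essential spectrum is the closed half-line $[\Sigma_{\mathrm{ess}}(P),\infty)$ rather than an open one.
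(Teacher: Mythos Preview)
Your proof is correct and is in fact more elementary than the paper's own argument. The paper establishes compactness of $f(H(P))$ for $f$ supported below $\Sigma_{\mathrm{ess}}(P)$ by pushing $f(H(P))$ through the isometry $j^R$ into the extended space via Lemma~\ref{lem:locerror}, observing that $f(F_P(D_x))=0$ there, and then letting $R\to\infty$; the inclusion $[\Sigma_{\mathrm{ess}}(P),\infty)\subset\sigma_{\mathrm{ess}}(H(P))$ is handled by an explicit Weyl sequence concentrated in momentum near a point $k_0$ with $F_P(k_0)=\lambda$. You bypass all of this by noticing that the restriction to the vacuum and one-particle sectors makes $\tilde V$ rank two, so Weyl's theorem reduces the question to the essential spectrum of the diagonal operator $H_0(P)$, which is read off directly from the range of $F_P$. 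Your route is the right one for this model specifically; the paper's geometric partition of unity argument is imported from \cite{MGRJSM} and is the machinery that survives in the full Fock space model, where the interaction is no longer compact and one genuinely needs the extended space to peel off a free boson. One minor remark: when you say ``the essential spectrum of such a multiplication operator coincides with the essential range of the multiplier'', it would be cleaner to say that the \emph{spectrum} coincides with the essential range, and then observe (or leave implicit) that a multiplication operator on $L^2(\R^\nu)$ has no isolated eigenvalues of finite multiplicity, so the two notions of spectrum agree.
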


\begin{proof}
  Using Lemma~\ref{lem:locerror} for an $f\in C_0^\infty(\R)$
  supported in $\left(-\infty,\Sigma_\mathrm{ess}(P)\right)$ and
  letting $R$ tend to infinity shows that $f(H(P))$ is compact. This
  proves the first part.

  To prove the last part, let
  $\lambda\in\left[\Sigma_\mathrm{ess}(P),\infty\right)$ and note that
  there exists a $k_0\in\R^\nu$ such that
  $\lambda=\Omega(P-k_0)+\omega(k_0)$. Now choose
  $u_n=(0,u_{1n})\in\C\oplus L^2(\R^\nu)$ with $\hat
  u_{1n}(\cdot)=n^{\frac{\nu}{2}}f(n(\cdot-k_0))$ for some $f\in
  C_0^\infty(\R^\nu)$ with $f\ge0$ and $f(0)=1$. One may now check
  that $u_n$ is a Weyl sequence for the energy $\lambda$.
\end{proof}

\begin{theorem}\label{thm:Mourre}
  Assume that $\lambda\not\in\theta(P)$. Let $A_P$ be given as in
  Theorem~\ref{thm:C2} with
  $v_P(D_x)=\nabla\omega(D_x)-\nabla\Omega(P-D_x)$. Then there exist
  constants $\kappa,c>0$ and a compact operator $K$ such that
  \begin{align*}
    E_{\lambda,\kappa}(H(P))[H(P),\ri A_P]^\circ
    E_{\lambda,\kappa}(H(P))\ge cE_{\lambda,\kappa}(H(P))+K,
  \end{align*}
  where $E_{\lambda,\kappa}$ denotes the characteristic function of
  the interval $[\lambda-\kappa,\lambda+\kappa]$.
\end{theorem}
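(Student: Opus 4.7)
The plan is to apply an IMS-type localization through the partition $j^R$ and Lemma~\ref{lem:locerror}, combining this with the observation that the chosen $v_P$ makes the commutator strictly positive on the ``free'' channel. First I would note that on the free block, the extended commutator from Theorem~\ref{thm:C2} (extended version) is the Fourier multiplier
\[
B_\infty := v_P(D_x) \cdot \bigl(\nabla\omega(D_x) - \nabla\Omega(P-D_x)\bigr) = |\nabla\omega(D_x) - \nabla\Omega(P-D_x)|^2 \geq 0,
\]
whose symbol vanishes exactly where $\nabla\omega(k) = \nabla\Omega(P-k)$. By Conditions~\ref{cond:Omega}(i) and \ref{cond:omega}(ii) the function $F_P(k) = \omega(k) + \Omega(P-k)$ is coercive, so $F_P\inv([\lambda-\kappa, \lambda+\kappa])$ is compact for every $\kappa>0$. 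Since $\lambda \notin \theta(P)$ and $\theta(P)$ is closed, shrinking $\kappa > 0$ makes this compact set disjoint from the zero set of $|\nabla\omega - \nabla\Omega(P-\cdot)|^2$, and a compactness argument yields a constant $2c > 0$ with $f(F_P(D_x)) B_\infty f(F_P(D_x)) \geq 2c f(F_P(D_x))^2$ for any $f \in C_0^\infty(\R)$ with $\supp f \subset [\lambda-\kappa,\lambda+\kappa]$.

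Next I would pick such an $f$ with $f \equiv 1$ on $[\lambda-\kappa/2, \lambda+\kappa/2]$. Using that $j^R$ is an isometry and that $H\ext(P)$ and $[H\ext(P), \ri A_P\ext]^\circ$ are both block-diagonal, Lemma~\ref{lem:locerror} yields
\begin{align*}
f(H(P)) [H(P), \ri A_P]^\circ f(H(P)) &= (j_0^R)^* f(H(P))[H(P), \ri A_P]^\circ f(H(P)) j_0^R\\
&\quad + (j_\infty^R)^* f(F_P(D_x)) B_\infty f(F_P(D_x)) j_\infty^R + o_R(1),
\end{align*}
together with the analogous identity $f(H(P))^2 = (j_0^R)^* f(H(P))^2 j_0^R + (j_\infty^R)^* f(F_P(D_x))^2 j_\infty^R + o_R(1)$ obtained from the first half of the lemma. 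Substituting the previous lower bound on $B_\infty$ and rearranging would give
\[
f(H(P))[H(P),\ri A_P]^\circ f(H(P)) \geq 2c f(H(P))^2 + (j_0^R)^* f(H(P))\bigl([H(P),\ri A_P]^\circ - 2c\bigr) f(H(P)) j_0^R + o_R(1).
\]

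Finally I would show that the near-field remainder $(j_0^R)^* f(H(P))([H(P),\ri A_P]^\circ - 2c) f(H(P)) j_0^R$ is compact for each fixed $R$. Because $\tilde V$ has finite rank the resolvent difference $(H(P)-z)\inv - (H_0(P)-z)\inv$ is finite-rank, hence $f(H(P)) - f(H_0(P))$ is compact; on the free part $f(H_0(P))$ is block-diagonal with free block $f(F_P(D_x))$, and $f(F_P(D_x)) j_0(\cdot/R)$ is compact by Rellich (coercivity of $F_P$ gives compact $D_x$-support, $j_0(\cdot/R)$ gives compact $x$-support). The one-dimensional $\C$-component makes the remaining blocks of $f(H(P)) j_0^R$ automatically finite-rank, and since the commutator is bounded the whole remainder is compact. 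Sandwiching the inequality by $E_{\lambda,\kappa/2}(H(P))$ then replaces $f(H(P))^2$ by $E_{\lambda,\kappa/2}(H(P))$, and for $R$ large enough the $o_R(1)$ error can be absorbed into $c\,E_{\lambda,\kappa/2}(H(P))$, completing the Mourre estimate. The main obstacle is this compactness step: one must transfer the approximate $D_x$-localization provided by $f(H(P))$ through the off-diagonal coupling $\tilde V$ so that it can combine with the $x$-cutoff of $j_0^R$ via Rellich.
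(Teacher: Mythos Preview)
Your proposal is correct and follows essentially the same route as the paper: an IMS-type localization via $j^R$ and Lemma~\ref{lem:locerror}, the strict lower bound on the free channel coming from $\lambda\notin\theta(P)$, compactness of the near-field piece via the compactness of $f(H(P))\bigl(\begin{smallmatrix}1&0\\0&j_0(\cdot/R)\end{smallmatrix}\bigr)$, and finally absorbing the $o_R(1)$ error after sandwiching with the spectral projection. The only cosmetic caution is your phrase ``since the commutator is bounded'': $[H(P),\ri A_P]^\circ$ is only $\cD\to\cD^*$ bounded, so to make the compactness of the remainder airtight you should insert a $g\in C_0^\infty(\R)$ with $gf=f$ and use that $f(H(P))([H(P),\ri A_P]^\circ-2c)g(H(P))$ is bounded on $\cK$ (this is exactly what the paper does).
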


\begin{proof}
  We may find a $\kappa$ such that
  $[\lambda-2\kappa,\lambda+2\kappa]\cap\theta(P)=\emptyset$. Choose
  $f\in C_0^\infty(\R)$ with support in
  $[\lambda-2\kappa,\lambda+2\kappa]$ and equal to $1$ on
  $[\lambda-\kappa,\lambda+\kappa]$. Note that
  \begin{align*}
    \MoveEqLeft f(H(P))[H(P),\ri A_P]^\circ
    f(H(P))\\
&={j^R}^*j^Rf(H(P))[H(P),\ri A_P]^\circ f(H(P))\\
    &={j^R}^*f(H\ext(P))[H\ext(P),\ri A_P\ext]^\circ
    f(H\ext(P))j^R+o_R(1),
  \end{align*}
  by Lemma~\ref{lem:locerror}. Note that
  \begin{align}
    \MoveEqLeft f(H\ext(P))[H\ext(P),\ri A_P\ext]^\circ
    f(H\ext(P))j^R\nonumber\\
    \MoveEqLeft[1]=f(H(P))[H(P),\ri A_P]^\circ f(H(P))\begin{pmatrix}1&0\\0&j_0(\tfrac{\cdot}{R})
    \end{pmatrix}\label{eq:MourreBK}\\
    &\oplus
    f(F_P(D_x))\abs[\big]{\nabla\omega(D_x)-\nabla\Omega(P-D_x)}^2f(F_P(D_x))j_\infty(\tfrac{\cdot}{R}).\nonumber
  \end{align}
  Taking the support of $f$ into account, one finds that
  \begin{equation*}
    f(F_P(D_x))\abs[\big]{\nabla\omega(D_x)-\nabla\Omega(P-D_x)}^2f(F_P(D_x))\ge
    2c f^2(F_P(D_x))
  \end{equation*}
  for some positive constant $c>0$. It is easy to see that
  \[
    K(R)=f(H(P))\begin{pmatrix}1&0\\0&j_0(\frac{\cdot}{R})\end{pmatrix}
  \] 
  is compact.  Let $g\in C_0^\infty(\R)$ equal $1$
  on the support of $f$. Then 
  \begin{equation*}
    B=f(H(P))[H(P),\ri A_P]^\circ g(H(P))
  \end{equation*}
  is bounded and \eqref{eq:MourreBK} equals $BK(R)$. Hence by Lemma~\ref{lem:locerror}
  \begin{align*}
    \MoveEqLeft f(H(P))[H(P),\ri A_P]^\circ f(H(P))\\
    \MoveEqLeft[1]\ge{j^R}^*2cf^2(H(P))\begin{pmatrix}1&0\\0&j_0(\frac{\cdot}{R})
  \end{pmatrix}\oplus 2cf^2(F_P(D_x))j_\infty(\tfrac{\cdot}{R})\\
  &+{j^R}^*\bigl(B-2cf(H(P))\bigr)K(R)\oplus 0+o_R(1)\\
  &=2cf^2(H(P))+K_R+o_R(1),
  \end{align*}
  for some compact operator $K_R$ depending on $R$. One may now choose
  $R$ so large that $\norm{o_R(1)}\le c$ and sandwich the inequality
  with $E_{\lambda,\kappa}(H(P))$ on both sides to arrive at the
  desired result.
\end{proof}

We infer the following corollary of Theorems~\ref{thm:C2} and
\ref{thm:Mourre} by standard arguments of regular Mourre theory.

\begin{corollary}
  The essential spectrum of the fiber Hamiltonians is non-singular:
  \begin{equation*}
    \sigma_{\mathrm{sing}}(H(P))=\emptyset.
  \end{equation*}
\end{corollary}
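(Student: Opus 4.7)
The plan is to apply the standard machinery of regular Mourre theory to $H(P)$ with conjugate operator $A_P$ from Theorem~\ref{thm:C2}. The two hypotheses needed are already available: Theorem~\ref{thm:C2} ensures $H(P)\in C^2(A_P)$, which is strictly stronger than the $C^{1,1}(A_P)$ regularity required for the Mourre limiting absorption principle, and Theorem~\ref{thm:Mourre} provides a local Mourre estimate at every $\lambda\in\R\setminus\theta(P)$.

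Fix an arbitrary $\lambda_0\in\R\setminus\theta(P)$. Since $H(P)\in C^1(A_P)$, the Virial theorem applies, and combined with the Mourre inequality of Theorem~\ref{thm:Mourre} it implies that the eigenvalues of $H(P)$ in some sufficiently small Mourre interval $I$ around $\lambda_0$ are finite in number with finite multiplicity.

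Next, combining the local Mourre estimate with the $C^{1,1}(A_P)$ regularity, Mourre's limiting absorption principle yields that for any $s>1/2$ the weighted resolvent $\mu\mapsto\jnorm{A_P}^{-s}(H(P)-\mu\mp\ri 0)\inv\jnorm{A_P}^{-s}$ extends continuously from the upper and lower half-planes to $I\setminus\sigma_{\mathrm{pp}}(H(P))$. This forces the spectral measure of $H(P)$ on $I$ to be absolutely continuous away from the point spectrum, so $\sigma_{\mathrm{sc}}(H(P))\cap I=\emptyset$. As $\lambda_0\in\R\setminus\theta(P)$ was arbitrary, one obtains the inclusion $\sigma_{\mathrm{sc}}(H(P))\subset\theta(P)$.

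Finally, by the remark following the definition of $\theta(P)$, the threshold set $\theta(P)$ is locally finite, hence countable. A singular continuous measure carries no mass on countable sets, so the inclusion above forces $\sigma_{\mathrm{sc}}(H(P))=\emptyset$. No serious obstacle is anticipated: the argument is the textbook deduction of absence of singular continuous spectrum from a Mourre estimate together with $C^{1,1}(A_P)$ regularity, and the only potentially delicate input — local finiteness of the threshold set — has already been established in the preceding discussion using the real analyticity and rotation invariance of $\omega$ and $\Omega$.
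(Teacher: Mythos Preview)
Your proposal is correct and follows precisely the route the paper has in mind: the paper simply states that the corollary is inferred from Theorems~\ref{thm:C2} and~\ref{thm:Mourre} ``by standard arguments of regular Mourre theory,'' and your write-up spells out exactly those standard arguments (Virial theorem, limiting absorption principle from $C^{1,1}$ regularity, and the countability of $\theta(P)$).
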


\begin{theorem}\label{thm:lowsemcontMourre}
  Let $(P_0,\lambda_0)\in\R^{\nu+1}$. Assume that
  $\lambda_0\not\in\theta(P_0)\cup\sigma_{\mathrm{pp}}(P_0)$. Then
  there exists a constant $C>0$, a neighbourhood $\cO$ of $P_0$ and a
  function $f\in C_0^\infty(\R)$ with $f=1$ in a neighbourhood of
  $\lambda_0$ such that for all $P\in \cO$,
  \begin{equation*}
    f(H(P))[H(P),\ri A_{P_0}]^\circ f(H(P))\ge Cf^2(H(P))
  \end{equation*}
  where $A_{P_0}$ is given as in Theorem~\ref{thm:Mourre}.
\end{theorem}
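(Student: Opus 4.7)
The plan is first to improve the Mourre estimate at $(P_0, \lambda_0)$ to a strict one by exploiting $\lambda_0 \notin \sigma_{\mathrm{pp}}(H(P_0))$ to absorb the compact remainder, and then to propagate this strict estimate to momenta $P$ close to $P_0$ via a norm-continuity argument.

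Since $\lambda_0 \notin \theta(P_0)$, Theorem~\ref{thm:Mourre} supplies $\kappa_0, c > 0$ and a compact operator $K$ such that $E_0 [H(P_0), \ri A_{P_0}]^\circ E_0 \ge c E_0 + K$, with $E_0 := E_{\lambda_0, \kappa_0}(H(P_0))$. Since $\lambda_0$ is not an eigenvalue of $H(P_0)$, $E_{\lambda_0, \kappa}(H(P_0)) \to 0$ strongly as $\kappa \searrow 0$; approximating $K$ by finite-rank operators then yields $\norm{E_{\lambda_0, \kappa}(H(P_0)) K E_{\lambda_0, \kappa}(H(P_0))} \to 0$. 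Choose $\kappa_1 \in (0, \kappa_0]$ small enough that this quantity is $\le c/2$, and pick $f \in C_0^\infty(\R)$ with $\supp f \subset (\lambda_0 - \kappa_1, \lambda_0 + \kappa_1)$ and $f = 1$ on a neighborhood of $\lambda_0$. Sandwiching with $f(H(P_0))$ and using $E_{\lambda_0, \kappa_1}(H(P_0)) f(H(P_0)) = f(H(P_0))$ delivers
\[
  f(H(P_0)) [H(P_0), \ri A_{P_0}]^\circ f(H(P_0)) \ge (c/2) f^2(H(P_0)).
\]

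To transfer this to nearby $P$, I would show that $T(P) := f(H(P)) [H(P), \ri A_{P_0}]^\circ f(H(P))$ is norm continuous at $P_0$. The operator $H(P) - H(P_0)$ is a function of $D_x$ on the one-particle sector (plus a scalar on the vacuum sector) bounded by $C|P - P_0|\jnorm{D_x}^{\max(s_\Omega - 1, 0)}$ via the mean value theorem and Condition~\ref{cond:Omega}; since $H_0(P_0)$ dominates $\jnorm{D_x}^{s_\Omega}$, this is $H_0(P_0)$-bounded with relative bound $O(|P - P_0|)$, giving norm-resolvent continuity of $P \mapsto H(P)$ and hence $f(H(P)) \to f(H(P_0))$ in norm. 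From the formula in Theorem~\ref{thm:C2}, the commutator difference
\[
  [H(P), \ri A_{P_0}]^\circ - [H(P_0), \ri A_{P_0}]^\circ = \begin{pmatrix} 0 & 0 \\ 0 & v_{P_0}(D_x) \cdot \bigl(\nabla\Omega(P_0 - D_x) - \nabla\Omega(P - D_x)\bigr) \end{pmatrix}
\]
is, by Condition~\ref{cond:Omega} again, an $O(|P - P_0|)$ function of $D_x$ dominated by $H_0(P_0) + 1$; sandwiching with the energy cutoff $f(H(P))$ turns it into a bounded operator of norm $O(|P - P_0|)$. Combining the two estimates yields $\norm{T(P) - T(P_0)} \to 0$, and shrinking $\cO$ so that this is at most $c/8$ together with $\norm{f^2(H(P)) - f^2(H(P_0))} \le 1/2$ delivers the conclusion with, say, $C = c/8$.

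The main technical obstacle is controlling $\nabla\Omega(P - D_x)$ when $s_\Omega > 1$, where it is an unbounded operator. The key point is that all $P$-dependent pieces -- the perturbation $H(P) - H(P_0)$ and the commutator variation -- are dominated by powers of $\jnorm{D_x}$ no greater than $s_\Omega$, and hence are relatively $H_0(P_0)$-bounded; the energy localizations $f(H(\cdot))$ then convert these relative bounds into genuine norm bounds with a uniform constant in a small neighborhood of $P_0$.
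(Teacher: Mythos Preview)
Your strategy coincides with the paper's: upgrade the Mourre estimate at $P_0$ to a strict one by absorbing the compact remainder (using $\lambda_0\notin\sigma_{\mathrm{pp}}(H(P_0))$), then propagate to nearby $P$ via norm continuity of $P\mapsto f^2(H(P))$ and $P\mapsto f(H(P))[H(P),\ri A_{P_0}]^\circ f(H(P))$. Your continuity arguments are essentially the same as the paper's (the paper phrases them as norm continuity of $(H_0(0)-\ri)(H(P)-z)^{-1}$ and of $(H_0(0)-\ri)^{-1}[H(P),\ri A_{P_0}]^\circ(H_0(0)-\ri)^{-1}$, then invokes almost analytic functional calculus).

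There is, however, a genuine gap in your final step. From $\norm{T(P)-T(P_0)}\le c/8$ and $\norm{f^2(H(P))-f^2(H(P_0))}\le 1/2$ you only obtain
\[
  T(P)\ \ge\ \tfrac{c}{2}\,f^2(H(P))\ -\ \tfrac{3c}{8}\,I,
\]
and no choice of small constants turns an inequality of the form $T(P)\ge a\,f^2(H(P))-b\,I$ with $b>0$ into $T(P)\ge C f^2(H(P))$: on vectors where $f(H(P))$ is small but nonzero, the right-hand side stays bounded away from zero while the desired lower bound is tiny. The paper handles exactly this point by working with \emph{two} nested cutoffs: it proves the perturbed estimate for a wider $\tilde f$, obtaining $\tilde f(H(P))[H(P),\ri A_{P_0}]^\circ\tilde f(H(P))\ge -\tfrac{2\tilde C}{3}I+\tilde C\tilde f^2(H(P))$, and then multiplies on both sides by a narrower $f$ with $f\tilde f=f$, which converts $-\tfrac{2\tilde C}{3}I$ into $-\tfrac{2\tilde C}{3}f^2(H(P))$ and yields the claim with $C=\tilde C/3$. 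Inserting this nested-cutoff step into your argument closes the gap.
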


\begin{proof}
  We begin by noting that the object $[H(P),\ri A_{P_0}]^\circ$ is
  well-defined by Theorem~\ref{thm:C2}. By standard arguments using
  the fact that $\lambda_0\not\in\sigma_{\mathrm{pp}}(P_0)$ and Theorem~\ref{thm:Mourre}, there exist a function $\tilde f\in C_0^\infty(\R)$ and a constant $\tilde C$ such that
  \begin{equation*}
    \tilde f(H(P_0))[H(P_0),\ri A_{P_0}]^\circ\tilde f(H(P_0))\ge\tilde C\tilde f^2(H(P_0)),
  \end{equation*}
  with $\tilde f=1$ on a neighbourhood of $\lambda_0$. It is easy to
  see that ${(H_0(0)-\ri )(H(P)-z)\inv}$ and $(H_0(0)-\ri
  )\inv[H(P),\ri A_{P_0}]^\circ(H_0(0)-\ri )\inv$ are norm continuous
  as functions of $P$, and hence it follows by an application of the
  functional calculus of almost analytic extensions that $\tilde
  f^2(H(P))$ and $\tilde f(H(P))[H(P),\ri A_{P_0}]^\circ\tilde
  f(H(P))$ are norm continuous as functions of $P$.

  Let $\cO\ni P_0$ be a neighbourhood such that 
  \begin{gather*}
    \norm{\tilde f^2(H(P))-\tilde f^2(H(P_0))}\le \frac{\tilde C}{3}\quad\text{and}\\
    \norm{\tilde
  f(H(P))[H(P),\ri A_{P_0}]^\circ\tilde f(H(P))-\tilde
  f(H(P_0))[H(P_0),\ri A_{P_0}]^\circ\tilde f(H(P_0))}\le\frac{\tilde C}{3}
  \end{gather*}
  for all $P\in \cO$. Then 
  \begin{equation}\label{eq:lowsemcontproof}
     \tilde f(H(P))[H(P),\ri A_{P_0}]^\circ\tilde f(H(P))
     \ge-\frac{2\tilde C}{3}I+\tilde C\tilde f^2(H(P)).
  \end{equation}
  Choose now $C=\frac{\tilde C}{3}$ and $f\in C_0^\infty(\R)$ such
  that $f=1$ on a neighbourhood of $\lambda_0$ and $f=f\tilde f$. The
  result is then obtained by multiplying \eqref{eq:lowsemcontproof}
  from both sides with $f(H(P))$.
\end{proof}

\section{Propagation estimates}\label{sec:propest}

We will write $\bD=[H,\ri\,\cdot\,]+\frac{\ud}{\ud t}$ and
$\bd_0=[\Omega(D_x+D_y)+\omega(D_x),\ri\,\cdot\,]+\frac{\ud}{\ud t}$
for the Heisenberg derivatives. The following abbreviation will be
used to ease the notation:
  \begin{equation}
    [B]:=\begin{pmatrix}0&0\\0&B\end{pmatrix}.\label{eq:bracketnotation}
  \end{equation}

\begin{theorem}[Large velocity estimate]\label{thm:largevel}
  Let $\chi\in C_0^\infty(\R)$. There exists a constant $C_1$ such
  that for $R'>R>C_1$, one has
  \begin{equation*}
    \int_1^\infty\norm[\big]{\bigl[\bbbone_{[R,R']}\bigl(\tfrac{\abs{x-y}}{t}\bigr)
    \bigr]e^{-\ri tH}\chi(H)u}^2\frac{\ud t}{t}\le C\norm{u}^2
  \end{equation*}
\end{theorem}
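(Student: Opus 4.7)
The strategy is the standard propagation-observable method: construct an observable $\Phi(t)$, bounded uniformly in $t$, such that on $\Ran\chi(H)$ the Heisenberg derivative satisfies
\begin{equation*}
\bD\Phi(t)\ge\frac{c}{t}\chi(H)\bigl[\bbbone_{[R,R']}(|x-y|/t)\bigr]\chi(H)+R(t),\qquad \int_1^\infty\|R(t)\|\,\ud t<\infty.
\end{equation*}
Integrating $\frac{\ud}{\ud t}\langle e^{-\ri tH}u,\Phi(t)e^{-\ri tH}u\rangle=\langle e^{-\ri tH}u,\bD\Phi(t)e^{-\ri tH}u\rangle$ over $[1,T]$, using $[H,\chi(H)]=0$ and the fact that the indicator is a projection, one bounds the left-hand side of the theorem uniformly in $T$ by $2\sup_t\|\Phi(t)\|\cdot\|u\|^2+\|u\|^2\int_1^\infty\|R(t)\|\,\ud t$.

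Concretely, I would take $\Phi(t)=\chi(H)[F(|x-y|/t)]\chi(H)$ with $F\in C^\infty(\R)$ bounded, non-increasing, vanishing for $\lambda$ near $-\infty$, and satisfying $-F'=g^2$ for some $g\in C_0^\infty(\R)$ supported in a small neighbourhood of $[R,R']$ and with $g\ge c_0>0$ on $[R,R']$. Since $\chi(H)$ commutes with $H$, one must compute $\chi(H)\bD[F(|x-y|/t)]\chi(H)$. Three contributions arise. The explicit $\partial_t$ yields $-t^{-1}[(|x-y|/t)F'(|x-y|/t)]$. The commutator $[H_0,\ri[F(|x-y|/t)]]$ is block-diagonal, zero on the vacuum sector, and on the one-particle sector a standard pseudodifferential expansion---using the derivative bounds on $\omega,\Omega$ from Conditions~\ref{cond:omega}(i) and~\ref{cond:Omega}(ii) together with the momentum cutoff from $\chi(H)$---produces the symmetrisation of $t^{-1}v\cdot\tfrac{x-y}{|x-y|}F'(|x-y|/t)$ with $v:=\nabla\omega(D_x)-\nabla\Omega(D_y)$, modulo an $O(t^{-2})$ error in norm. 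The cross-sector commutator $[V,\ri[F(|x-y|/t)]]$ amounts to multiplication by $F(|x-y|/t)\rho(x-y)$ between sectors; since $F$ is supported in $\{\lambda\ge R_0\}$ for some $R_0>0$, integration in $z:=x-y$ localises to $|z|\ge R_0 t$, so~\eqref{eq:SR} bounds it by $O(t^{-1-\mu})$.

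Combining, $\bD\Phi(t)$ equals $-t^{-1}\chi(H)[F'(|x-y|/t)(\tfrac{|x-y|}{t}-v\cdot\tfrac{x-y}{|x-y|})]_{\mathrm{sym}}\chi(H)$ modulo an integrable-in-$t$ error. Let $V_\chi:=\|v\,\chi(H)\|$: this is finite because $\nabla\omega$ is bounded by Condition~\ref{cond:omega}(i), while Condition~\ref{cond:Omega}(i) forces $\jnorm{D_y}^{s_\Omega}\chi(H)$ to be bounded, whence $\nabla\Omega(D_y)\chi(H)$ is bounded by Condition~\ref{cond:Omega}(ii). Pick $C_1>V_\chi$; for $R>C_1$ the scalar symbol in brackets is pointwise $\ge(R-V_\chi)g(\lambda)^2\ge c(R-V_\chi)\bbbone_{[R,R']}(\lambda)$, and writing $-F'=g^2$ and symmetrising $g$ past the bounded operator $v\cdot\tfrac{x-y}{|x-y|}$, at the cost of a further $O(t^{-2})$ commutator remainder, yields the desired operator inequality. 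The main technical obstacle is this pseudodifferential analysis: one must keep track of the non-commutativity of $(x-y)/|x-y|$ with $\nabla\omega(D_x),\nabla\Omega(D_y)$ and use the energy cutoff to keep $\nabla\Omega(D_y)$ bounded even when $s_\Omega=2$. The short-range cancellations in the $V$-commutator are by contrast immediate from~\eqref{eq:SR}.
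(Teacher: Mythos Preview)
Your proposal is correct and follows essentially the same route as the paper: the same propagation observable $\chi(H)[F(|x-y|/t)]\chi(H)$ (up to a sign convention on $F$), the same pseudo-differential expansion of $\bd_0 F$, the same use of~\eqref{eq:SR} to dispose of the $V$-commutator, and the same appeal to Lemma~\ref{lem:propest}. The only cosmetic difference is that the paper makes the energy-cutoff step explicit by inserting a $\tilde\chi\in C_0^\infty(\R)$ with $\tilde\chi\chi=\chi$ and taking $C_1>\|\tilde\chi(H)(\nabla\Omega(D_y)-\nabla\omega(D_x))\tfrac{x-y}{|x-y|}\|$, which is precisely the device needed to justify your phrase ``symmetrising $g$ past the bounded operator $v\cdot\tfrac{x-y}{|x-y|}$'' when $s_\Omega>1$.
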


\begin{proof}
  Let $C_1$ be a constant to be specified later and $R'>R>C_1$. Let
  $F\in C^\infty(\R)$ equal $0$ near the origin and $1$ near
  infinity such that $F'(s)\ge c\bbbone_{[R,R']}(s)$ for some positive
  constant $c>0$. Let
  \begin{align*}
    \Phi(t)&=-\chi(H)[F(\tfrac{\abs{x-y}}{t})]\chi(H),\\
    b(t)&=-\bd_0F(\tfrac{\abs{x-y}}{t}).
  \end{align*}
  By using e.g.\ Theorem~\ref{thm:MGR} or
  pseudo-differential calculus one sees that
  \begin{equation*}
    b(t)=\frac{1}{t}\bigl(\tfrac{\abs{x-y}}{t}-(\nabla\Omega(D_y)-\nabla\omega(D_x))\cdot\tfrac{x-y}{\abs{x-y}}\bigr)F'\bigl(\tfrac{\abs{x-y}}{t}\bigr)+O(t^{-2}).
  \end{equation*}
  Hence for any $\tilde\chi\in C_0^\infty(\R)$ such that $\chi=\chi\tilde\chi$ one finds that
  \begin{align*}
    \MoveEqLeft-\chi(H)[b(t)]\chi(H)\\
    &=\frac{1}{t}\chi(H)\Bigl(\tfrac{\abs{x-y}}{t}-(\nabla\Omega(D_y)-\nabla\omega(D_x))\cdot\tfrac{x-y}{\abs{x-y}}\Bigr)F'\bigl(\tfrac{\abs{x-y}}{t}\bigr)\chi(H)+O(t^{-2})\\
    &=\frac{1}{t}\chi(H)\Bigl(\tfrac{\abs{x-y}}{t}-\tilde\chi(H)(\nabla\Omega(D_y)-\nabla\omega(D_x))\cdot\tfrac{x-y}{\abs{x-y}}\Bigr)\bbbone_{[C_1,\infty)}\bigl(\tfrac{\abs{x-y}}{t}\bigr)\\
    &\qquad\times F'\bigl(\tfrac{\abs{x-y}}{t}\bigr)\chi(H)+O(t^{-2})\\
    &\ge \frac{C_0}{t}\chi(H)F'\bigl(\tfrac{\abs{x-y}}{t}\bigr)\chi+O(t^{-2})
  \end{align*} for some $C_0>0$ if one chooses
  $C_1>\norm{\tilde\chi(H)(\nabla\Omega(D_y)-\nabla\omega(D_x))\tfrac{x-y}{\abs{x-y}}}$.
  
  It follows from Condition~\ref{cond:rho}~\eqref{cond:item:SR} that
  \begin{equation*}
    [V,\ri [F(\tfrac{\abs{x-y}}{t})]]=O(t^{-1-\mu}),
  \end{equation*}
  cf. \eqref{eq:SR}.  Putting this together, we get
  \begin{align*}
    \bD\Phi(t)\ge\tfrac{C_0}{t}\chi(H)[F'(\tfrac{\abs{x-y}}{t})]\chi(H)+O(t^{-1-\mu}),
  \end{align*}
  which combined with Lemma~\ref{lem:propest} implies the result.
\end{proof}

\begin{theorem}[Phase-space propagation estimate]\label{thm:phase-space}
  Let $\chi\in C_0^\infty(\R)$, $0<c_0<c_1$. Write
  \begin{align*}
    \MoveEqLeft[1]\Theta_{[c_0,c_1]}(t)=\\
    &\bigl[\inner[\big]{\tfrac{x-y}{t}-\nabla\omega(D_x)+\nabla\Omega(D_y)}{\bbbone_{[c_0,c_1]}\bigl(\tfrac{\abs{x-y}}{t}\bigr)\bigl(\tfrac{x-y}{t}-\nabla\omega(D_x)+\nabla\Omega(D_y)\bigr)}\bigr].
  \end{align*}
Then
  \begin{equation}
      \int_1^\infty\norm[\big]{\Theta_{[c_0,c_1]}(t)^{\frac{1}{2}}e^{-\ri tH}\chi(H)u}^2\frac{\ud
        t}{t}\le C\norm{u}^2.
  \end{equation}
\end{theorem}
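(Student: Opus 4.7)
The plan is to apply the method of positive propagation observables, paralleling the proof of Theorem~\ref{thm:largevel}: construct a uniformly bounded observable $\Phi(t)$ whose Heisenberg derivative $\bD\Phi(t)$ dominates $\frac{1}{t}\chi(H)\Theta_{[c_0,c_1]}(t)\chi(H)$ modulo $L^1([1,\infty),\ud t)$ terms, and then invoke Lemma~\ref{lem:propest}. The key difference from Theorem~\ref{thm:largevel} is that $\Theta_{[c_0,c_1]}(t)$ is \emph{quadratic} in the relative velocity error $v - \tfrac{x-y}{t}$, where $v := \nabla\omega(D_x) - \nabla\Omega(D_y)$, rather than linear.

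For the construction I would pick a smooth non-negative $g\in C_0^\infty(\R)$ with $g\equiv 1$ on $[c_0,c_1]$ and $\supp g\subset (c_0/2, 2c_1)$, and then a smooth, rotation invariant, bounded function $R:\R^\nu\to\R$ with bounded derivatives of all orders whose Hessian satisfies $\nabla^2 R(\xi)\ge c\,g^2(|\xi|)\,I$ as matrices for some $c>0$ (for example, a radial smoothing of $\int_0^{|\xi|}\!\int_0^s g^2(r)\,\ud r\,\ud s$, truncated for large $|\xi|$). The candidate observable is
\[
\Phi(t) = \chi(H)\bigl[R\bigl(\tfrac{x-y}{t}\bigr)\bigr]\chi(H),
\]
which is uniformly bounded in $t$ since $R$ is bounded. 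Following the computational pattern of Theorem~\ref{thm:largevel}, I would use Theorem~\ref{thm:MGR} or direct pseudodifferential calculus to compute
\[
\bd_0 R\bigl(\tfrac{x-y}{t}\bigr) = \tfrac{1}{2t}\,\mathrm{sym}\bigl((v - \tfrac{x-y}{t})\cdot \nabla R(\tfrac{x-y}{t})\bigr) + O(t^{-2}),
\]
where $\mathrm{sym}(AB) = \tfrac12(AB+BA)$. This is only linear in $(v-\tfrac{x-y}{t})$.

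The heart of the argument is then to rewrite this as a positive quadratic form modulo $L^1(\ud t)$ errors. One uses the Taylor identity $\nabla R(z) = \bigl(\int_0^1\nabla^2 R(sz)\,\ud s\bigr) z$ at $z=\tfrac{x-y}{t}$, combined with a careful symmetrization that moves $\nabla^2 R(\tfrac{x-y}{t})$ between two factors of $(v-\tfrac{x-y}{t})$, with commutator errors controlled through the $\chi(H)$ cutoffs and the functional calculus of almost analytic extensions. This should yield, as a quadratic form on $\Ran\,\chi(H)$,
\[
\chi(H)\bd_0 R\bigl(\tfrac{x-y}{t}\bigr)\chi(H) \ge \tfrac{c'}{t}\chi(H)\bigl[(v-\tfrac{x-y}{t})^{\!*}\,g^2\bigl(\tfrac{|x-y|}{t}\bigr)\,(v-\tfrac{x-y}{t})\bigr]\chi(H)+O(t^{-1-\epsilon})
\]
for some $\epsilon>0$. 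The interaction contribution $\chi(H)[V,\ri[R(\tfrac{x-y}{t})]]\chi(H)$ is $O(t^{-1-\mu})$ by Condition~\ref{cond:rho}\eqref{cond:item:SR} and \eqref{eq:SR}, since $\nabla R$ (and indeed $R$ itself, with a suitable choice) is supported where $|\tfrac{x-y}{t}|\ge c_0/2$, pushing $\rho(x-y)$ into the short-range regime. Since $g^2\ge\bbbone_{[c_0,c_1]}$ by construction, $\bD\Phi(t)$ dominates $\tfrac{c''}{t}\chi(H)\Theta_{[c_0,c_1]}(t)\chi(H)$ modulo integrable errors, and Lemma~\ref{lem:propest} concludes.

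The main obstacle is the ``squaring'' step above: extracting a positive quadratic form in $(v-\tfrac{x-y}{t})$ from the naive Heisenberg derivative, which is only linear in this expression. This is a delicate but well-documented pseudodifferential computation---variants appear in the time-dependent scattering theory developed in \cite{DGe}---in which the symbol of $\nabla R$ is resolved via Taylor's theorem and then symmetrized across two copies of the velocity-error operator, with all resulting commutator errors absorbed into the $O(t^{-1-\epsilon})$ remainder under the $\chi(H)$ cutoffs. The remaining ingredients---uniform boundedness of $\Phi(t)$, the short-range bound on $[V,\ri\Phi(t)]$, and the invocation of Lemma~\ref{lem:propest}---are routine given the analogous treatment in the proof of Theorem~\ref{thm:largevel}.
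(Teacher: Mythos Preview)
Your proposal has a genuine gap in the ``squaring step.'' With the observable $\Phi(t)=\chi(H)[R(\tfrac{x-y}{t})]\chi(H)$, the Heisenberg derivative is indeed only linear in $X=\tfrac{x-y}{t}-\nabla\omega(D_x)+\nabla\Omega(D_y)$:
\[
\bd_0 R\bigl(\tfrac{x-y}{t}\bigr)=-\tfrac{1}{t}\,X\cdot\nabla R\bigl(\tfrac{x-y}{t}\bigr)+O(t^{-2}).
\]
Your Taylor identity $\nabla R(z)=\bigl(\int_0^1\nabla^2 R(sz)\,\ud s\bigr)z$ turns this into $-\tfrac{1}{t}\,X\cdot M(\tfrac{x-y}{t})\,\tfrac{x-y}{t}$, but $\tfrac{x-y}{t}$ is \emph{not} $X$: the difference is $v=\nabla\omega(D_x)-\nabla\Omega(D_y)$, which is an order-one operator and cannot be absorbed into an $O(t^{-1-\epsilon})$ remainder by any symmetrization. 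There is simply no mechanism here that manufactures a second factor of $X$, and the form $X\cdot M\,\tfrac{x-y}{t}$ that you actually obtain has no definite sign. (Note also that $M(z)=\int_0^1\nabla^2 R(sz)\,\ud s$ samples $\nabla^2R$ along the whole segment from $0$ to $z$, so the localisation to $\abs{z}\ge c_0/2$ is lost as well.)

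The remedy --- going back to Graf and used in the paper --- is to build the missing factor of $X$ into the observable from the start: take
\[
b(t)=R\bigl(\tfrac{x-y}{t}\bigr)-\tfrac12\bigl(\nabla R(\tfrac{x-y}{t})\cdot X+\hc\bigr),\qquad \Phi(t)=\chi(H)[b(t)]\chi(H).
\]
A direct computation then gives $\bd_0 b(t)=\tfrac{1}{t}\inner{X}{\nabla^2 R(\tfrac{x-y}{t})X}+O(t^{-2})$, with the quadratic form appearing for free. There is a second point you do not address: for $\Phi(t)$ to be uniformly bounded one needs $R$ bounded, but a bounded $R$ with $\nabla^2R\ge\bbbone_{[c_0,c_1]}(\abs{\cdot})$ everywhere does not exist; the truncation that makes $R$ bounded forces $\nabla^2R$ to acquire a negative part supported at large $\tfrac{\abs{x-y}}{t}$. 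That negative contribution is exactly what the large velocity estimate (Theorem~\ref{thm:largevel}) absorbs, and without invoking it the argument does not close.
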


\begin{proof}
  The following construction is taken from \cite{DGe} but ultimately
  goes back to a construction of Graf, see e.g.\ \cite{Graf}. There
  exists a function $R_0\in C^\infty(\R^\nu)$ such that
  \begin{align*}
    &&\strut&&\strut&& R_0(x)&=0&\text{ for
    }\abs{x}&\le\tfrac{c_0}{2},&&\strut&&\strut&&\\
    &&\strut&&\strut&& R_0(x)&=\tfrac{1}{2}x^2+c&\text{ for
    }\abs{x}&\ge2c_1,&&\strut&&\strut&&\\
    &&\strut&&\strut&& \nabla^2R_0(x)&\ge\bbbone_{[c_0,c_1]}(\abs{x}).\hspace{-2cm}&&&&\strut&&\strut&&
  \end{align*}
  Without loss of generality, we may assume that $c_1>C_1+1$, where
  $C_1$ is the constant whose existence is ensured by
  Theorem~\ref{thm:largevel}. Choose a constant $c_2>c_1+1$ and a
  smooth function $F$ such that $F(s)=1$ for $s<c_1$ and $F(s)=0$ for
  $s\ge c_2$. Let
  \begin{equation*}
    R(x)=F(\abs{x})R_0(x).
  \end{equation*}
  Then $R$ satisfies
  \begin{align}
    \nabla^2R(x)&\ge\bbbone_{[c_0,c_1]}(\abs{x})-C\bbbone_{[C_1+1,c_2]}(\abs{x}),\label{eq:phasespaceineq}\\
    \abs{\partial^\alpha R(x)}&\le C_\alpha.\nonumber
  \end{align}
  Write $X=\tfrac{x-y}{t}-\nabla\omega(D_x)+\nabla\Omega(D_y)$ and let 
  \begin{equation*}
    \Phi(t)=\chi(H)[b(t)]\chi(H),
  \end{equation*}
  where
  \begin{equation*}
    b(t)=R(\tfrac{x-y}{t})-\tfrac{1}{2}\bigl(\inner{\nabla R(\tfrac{x-y}{t})}{X}+\hc\bigr).
  \end{equation*}
  By using Condition~\ref{cond:rho}~\eqref{cond:item:SR} and
  pseudo-differential calculus, one sees that
  \begin{equation*}
    \norm[\bigg]{\chi(H)\begin{pmatrix}0&0\\-\ri b(t)\rho(x-\cdot\,)&0
      \end{pmatrix}\chi(H)}\in O(t^{-1-\mu})
  \end{equation*}
  and hence 
  \begin{equation*}
    \chi(H)[V,\ri [b(t)]]\chi(H)\in O(t^{-1-\mu}).
  \end{equation*}
  Compute
  \begin{align*}
    \tfrac{d}{dt}b(t)\MoveEqLeft=-\tfrac{1}{t}\inner{\tfrac{x-y}{t}}{\nabla R(\tfrac{x-y}{t})}\\&+\tfrac{1}{2}\tfrac{1}{t}\bigl(\inner{\tfrac{x-y}{t}}{\nabla^2R(\tfrac{x-y}{t})X}+\hc\bigr)\\&+\tfrac{1}{t}\inner{\nabla R(\tfrac{x-y}{t})}{\tfrac{x-y}{t}}\\
    \MoveEqLeft=\tfrac{1}{2}\tfrac{1}{t}\bigl(\inner{\tfrac{x-y}{t}}{\nabla^2R(\tfrac{x-y}{t})X}+\hc\bigr),
  \end{align*}
  and by pseudo-differential calculus one sees that
  \begin{align*}
    [\omega(D_x)+\Omega(D_y),\ri b(t)]\MoveEqLeft=\tfrac{1}{2}\tfrac{1}{t}\bigl(\inner{\nabla\omega(D_x)-\nabla\Omega(D_y)}{\nabla R(\tfrac{x-y}{t})}+\hc\bigr)\\
    &-\tfrac{1}{2}\tfrac{1}{t}\bigl(\inner{\nabla\omega(D_x)-\nabla\Omega(D_y)}{\nabla^2R(\tfrac{x-y}{t})X}+\hc\bigr)\\&
    -\tfrac{1}{2}\tfrac{1}{t}\bigl(\inner{\nabla R(\tfrac{x-y}{t})}{\nabla\omega(D_x)-\nabla\Omega(D_y)}+\hc\bigr)\\&+O(t^{-2})\\
    \MoveEqLeft=-\tfrac{1}{2}\tfrac{1}{t}\bigl(\inner{\nabla\omega(D_x)-\nabla\Omega(D_y)}{\nabla^2R(\tfrac{x-y}{t})X}+\hc\bigr)\\&+O(t^{-2}),
  \end{align*}
  hence by using \eqref{eq:phasespaceineq}, it follows that
  \begin{align*}
    \MoveEqLeft\chi(H)[\bd_0b(t)]\chi(H)\\
    &=\tfrac{1}{t}\chi(H)[\inner{X}{\nabla^2R(\tfrac{x-y}{t})X}]\chi(H)+O(t^{-2})\\
    &\ge
    \tfrac{1}{t}\chi(H)\bigl[\inner[\big]{X}{\bbbone_{[c_0,c_1]}\bigl(\tfrac{\abs{x-y}}{t}\bigr)X}\bigr]\chi(H) \\
     &\quad-\tfrac{C}{t}\chi(H)\bigl[\inner[\big]{X}{\bbbone_{[C_1+1,c_2]}\bigl(\tfrac{\abs{x-y}}{t}\bigr)X}\bigr]\chi(H)+O(t^{-2})\\
  \end{align*}
  By introducing $J\in C_0^\infty(\R;[0,1])$ supported above $C_1$
  with $J\bbbone_{[C_1+1,c_2]}=\bbbone_{[C_1+1,c_2]}$ and
  $\tilde\chi\in C_0^\infty(\R)$ with $\tilde\chi\chi=\chi$ and using
  pseudo-differential calculus, the functional calculus of almost
  analytic extensions and Condition~\ref{cond:rho} \eqref{cond:item:SR} again, one gets that
  \begin{align*}
    \MoveEqLeft\tfrac{C}{t}\chi(H)\bigl[X_i\bbbone_{[C_1+1,c_2]}\bigl(\tfrac{\abs{x-y}}{t}\bigr)X_i\bigr]\chi(H)\\&\le
    \tfrac{C}{t}\chi\tilde\chi(H)\bigl[X_iJ^3\bigl(\tfrac{\abs{x-y}}{t}\bigr)X_i\bigr]\tilde\chi\chi(H)\\
&=\tfrac{C}{t}\chi(H)\bigl[J\bigl(\tfrac{\abs{x-y}}{t}\bigr)\bigr]\tilde\chi(H)\bigl[X_iJ\bigl(\tfrac{\abs{x-y}}{t}\bigr)X_i\bigr]\tilde\chi(H)\bigl[J\bigl(\tfrac{\abs{x-y}}{t}\bigr)\bigr]\chi(H)+O(t^{-2})\\
&\le\tfrac{C'}{t}\chi(H)\bigl[J^2\bigl(\tfrac{\abs{x-y}}{t}\bigr)\bigr]\chi(H)+Ct^{-2},
  \end{align*}
  where we estimated
  $\tilde\chi(H)\bigl[X_iJ\bigl(\tfrac{\abs{x-y}}{t}\bigr)X_i\bigr]\tilde\chi(H)$
  by a constant.  Putting it all together yields
  \begin{align*}
    \bD\Phi(t)\ge \tfrac{1}{t}\chi(H)\Theta_{[c_0,c_1]}(t)\chi(H)-\tfrac{C}{t}\chi(H)[J^2(\tfrac{\abs{x-y}}{t})]\chi(H)+O(t^{-1-\mu}),
  \end{align*}
  where the second term is integrable along the evolution by
  Theorem~\ref{thm:largevel}, so the result now follows from
  Lemma~\ref{lem:propest}.
\end{proof}

\begin{theorem}[Improved phase-space propagation estimate]\label{thm:improvedphase-space}
  Let $0<c_0<c_1$, $J\in C_0^\infty(c_0<\abs{x}<c_1)$, $\chi\in
  C_0^\infty(\R)$. Then for $1\le i\le\nu$
  \begin{align*}
    \label{eq:improvedphase-space}
    \int_1^\infty\norm[\big]{\bigl[\abs[\big]{J(\tfrac{x-y}{t})\bigl(\tfrac{x_i-y_i}{t}-\partial_i\omega(D_x)+\partial_i\Omega(D_y)\bigr)+\hc}\bigr]^{\frac{1}{2}}e^{-\ri
        tH}\chi(H)u}^2\frac{\ud t}{t}\le C\norm{u}^2
  \end{align*}
\end{theorem}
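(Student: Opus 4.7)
I would follow the template of Theorems~\ref{thm:largevel} and~\ref{thm:phase-space}: construct a uniformly bounded self-adjoint propagation observable $\Phi(t)=\chi(H)[b(t)]\chi(H)$ whose Heisenberg derivative is bounded below by $\tfrac{1}{t}\chi(H)[\abs{J(\tfrac{x-y}{t})Y_i+\hc}]\chi(H)$ plus terms integrable in $t$ along the evolution of $\chi(H)u$, and then invoke Lemma~\ref{lem:propest}. Write $Y_i=\tfrac{x_i-y_i}{t}-\partial_i\omega(D_x)+\partial_i\Omega(D_y)$. The computational backbone is the ``asymptotic conservation law''
\[
\bd_0 Y_i=-\tfrac{1}{t}Y_i,
\]
a consequence of $\tfrac{\ud}{\ud t}\tfrac{x_i-y_i}{t}=-\tfrac{1}{t}\tfrac{x_i-y_i}{t}$ together with $\ri[\omega(D_x)+\Omega(D_y),\tfrac{x_i-y_i}{t}]=\tfrac{1}{t}(\partial_i\omega(D_x)-\partial_i\Omega(D_y))$. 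This lifts via Helffer--Sjöstrand calculus to $\bd_0 G(Y_i)=-\tfrac{1}{t}G'(Y_i)Y_i+O(t^{-2})$ for smooth $G$.

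Pick a widened cutoff $\tilde J\in C_0^\infty(c_0'<\abs{x}<c_1')$ with $c_0'<c_0<c_1<c_1'$ and $\tilde J=1$ on $\supp J$, and an even smooth bounded function $G$ with $G''\ge 0$ and $G(s)=\abs{s}$ on a symmetric interval containing the uniform bound of $\tilde J(\tfrac{x-y}{t}) Y_i\,\tilde J(\tfrac{x-y}{t})$ on $\Ran\chi(H)$. Set
\[
b(t)=-\tilde J(\tfrac{x-y}{t})\,G(Y_i)\,\tilde J(\tfrac{x-y}{t}).
\]
Combining the Leibniz rule, the lifted conservation law, the expansion $\bd_0\tilde J(\tfrac{x-y}{t})=-\tfrac{1}{t}Y\cdot(\nabla\tilde J)(\tfrac{x-y}{t})+O(t^{-2})$ from the proof of Theorem~\ref{thm:phase-space}, and $[V,[b(t)]]=O(t^{-1-\mu})$ by Condition~\ref{cond:rho}(iii) via~\eqref{eq:SR}, the Heisenberg derivative expands as
\[
\bD\Phi(t)=\tfrac{1}{t}\chi(H)\bigl[\tilde J\,G'(Y_i)Y_i\,\tilde J\bigr]\chi(H)+R(t)+O(t^{-1-\mu}),
\]
where $R(t)$ collects $\tfrac{1}{t}$-scale cross terms containing $\nabla\tilde J$, which are supported in annuli disjoint from $\supp J$ and absorbed into the evolution integral by Theorem~\ref{thm:phase-space} through a Cauchy--Schwarz operator inequality of the form $\pm(AB+B^*A^*)\le\epsilon AA^*+\epsilon^{-1}B^*B$.

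By the choice of $G$, the principal term equals $\tfrac{1}{t}\chi(H)[\tilde J\abs{Y_i}\tilde J]\chi(H)$ on the relevant spectral window. To link this to the operator absolute value of the statement, write $JY_i+Y_iJ=2J^{1/2}Y_iJ^{1/2}+[J^{1/2},Y_i]J^{1/2}-J^{1/2}[J^{1/2},Y_i]$, use the pseudodifferential commutator bound $[J^{1/2}(\tfrac{x-y}{t}),Y_i]=[J^{1/2}(\tfrac{x-y}{t}),-\partial_i\omega(D_x)+\partial_i\Omega(D_y)]=O(t^{-1})$, apply the Jordan decomposition $Y_i=Y_i^+-Y_i^-$ to get $\abs{J^{1/2}Y_iJ^{1/2}}\le J^{1/2}\abs{Y_i}J^{1/2}$, and finally use the factorization $J^{1/2}=J^{1/2}\tilde J$ (valid since $\tilde J=1$ on $\supp J$) together with $\|J^{1/2}\|\le 1$ to bound the result by a multiple of $\tilde J\abs{Y_i}\tilde J$ modulo integrable errors.

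\textbf{Main obstacle.} The genuinely delicate step is the last one: comparing the operator absolute value $\abs{JY_i+Y_iJ}$ to the ``scalar'' $\tilde J\abs{Y_i}\tilde J$ naturally produced by the convex $G$. A crude pointwise operator inequality of that shape is not available from positivity alone, because $J^{1/2}\abs{Y_i}J^{1/2}$ and $\tilde J\abs{Y_i}\tilde J$ are not comparable in general; one needs an integrated argument in which all the $O(t^{-1})$ pseudodifferential commutator corrections between $\abs{Y_i}^{1/2}$ (or $G'(Y_i)$) and the cutoffs $J^{1/2},\tilde J$ are tracked and absorbed via Theorem~\ref{thm:phase-space} and~\eqref{eq:SR}. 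A secondary point of care, easily mitigated by the compactness of $\supp\tilde J$ and the global boundedness of $G$, is keeping the functional calculus of the unbounded operator $Y_i$ under pseudodifferential control throughout.
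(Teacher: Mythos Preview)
Your overall strategy—build a propagation observable from a regularised absolute value of $Y_i$ and feed the cross terms into Theorem~\ref{thm:phase-space}—is the right shape and matches the paper's. But two of your operator-theoretic steps fail as written.

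First, the function $G$ you describe does not exist: an even, smooth, \emph{bounded}, globally convex function is constant (for $s>0$ convexity makes $G'$ nondecreasing, evenness gives $G'(0^+)\ge 0$, and boundedness forces $G'\equiv 0$). Dropping boundedness makes $G(Y_i)$ unbounded, and since $\tilde J$ and $\chi(H)$ do not commute with $Y_i$, a norm bound on $\chi(H)[\tilde J Y_i\tilde J]\chi(H)$ does \emph{not} restrict the spectral support of $Y_i$; you therefore cannot conclude that $\tilde J\,G'(Y_i)Y_i\,\tilde J$ coincides with $\tilde J\,|Y_i|\,\tilde J$ on the relevant range.

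Second—and this is the step you yourself flag as the main obstacle—the Jordan-decomposition inequality $\abs{J^{1/2}Y_iJ^{1/2}}\le J^{1/2}\abs{Y_i}J^{1/2}$ is false. Conjugating $Y_i=Y_i^+-Y_i^-$ by $J^{1/2}$ gives a difference of positive operators, but not the Jordan decomposition of the result, and $\abs{A-B}\le A+B$ fails for general positive $A,B$. Concretely, with $Y_i=\bigl(\begin{smallmatrix}1&0\\0&-1\end{smallmatrix}\bigr)$ and $J^{1/2}=\bigl(\begin{smallmatrix}1&1/2\\1/2&1\end{smallmatrix}\bigr)$ one gets $\abs{J^{1/2}Y_iJ^{1/2}}=\tfrac34 I$, while $J^{1/2}\abs{Y_i}J^{1/2}=J$ has smallest eigenvalue $\tfrac14$.

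The paper avoids both issues by taking the regularisation to be time-dependent, $A=X^2+t^{-\delta}$ (with the full vector $X$), so that $A^{1/2}$ is an honest unbounded operator handled through the resolvent $R_0=(H_0-\lambda)^{-1}$, and by never comparing absolute values directly: with $B_{0,i}=R_0[J X_i]R_0+\hc$ and $B_1=R_0[J_1A^{1/2}J_1]R_0$ it shows $B_{0,i}^2\le CB_1^2+O(t^{-\kappa})$ by explicit commutator manipulations and then invokes operator monotonicity of $\lambda\mapsto\lambda^{1/2}$ to obtain $\abs{B_{0,i}}\le CB_1+Ct^{-\kappa/2}$. Combined with $\bd_0 A^{1/2}=-\tfrac1t A^{1/2}+O(t^{-1-\delta/2})$, this closes the argument. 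Your one-component route can be repaired along the same lines: replace $G(Y_i)$ by $(Y_i^2+t^{-\delta})^{1/2}$ and replace the Jordan-decomposition step by this squaring argument.
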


\begin{proof}
  For brevity, we write
  $X=\frac{x-y}{t}-\nabla\omega(D_x)+\nabla\Omega(D_y)$ and $R_0=(H_0-\lambda)\inv$ for some real $\lambda\in\rho(H_0)$.
  Let
  \begin{equation*}
    A=X^2+t^{-\delta},
  \end{equation*}
  $\delta>0$.  Note that $[J(\tfrac{x-y}{t})A^{\frac{1}{2}}]R_0$ is
  uniformly bounded in $t\ge1$. 
  
  The following identities hold as forms on $C_0^\infty(\R^\nu)$.
  \begin{equation*}
    e^{\ri t(\omega(D_x)+\Omega(D_y))}Xe^{-\ri t(\omega(D_x)+\Omega(D_y))}=\tfrac{x-y}{t},
  \end{equation*}
  \begin{equation}\label{eq:A0}
    e^{\ri t(\omega(D_x)+\Omega(D_y))}A^{\frac{1}{2}}e^{-\ri t(\omega(D_x)+\Omega(D_y))}=\bigl((\tfrac{x-y}{t})^2+t^{-\delta}\bigr)^{\frac{1}{2}}:=A_0^{\frac{1}{2}}
  \end{equation}
  and
  \begin{equation}\label{eq:J}
    e^{\ri t(\omega(D_x)+\Omega(D_y))}J(X)e^{-\ri t(\omega(D_x)+\Omega(D_y))}=J(\tfrac{x-y}{t}).
  \end{equation}
  
  That the following commutator, viewed as a form on
  $C_0^\infty(\R^\nu)$, extends by continuity to a bounded form on
  $L^2(\R^\nu)$ can be seen using pseudo-differential calculus:
  \begin{equation*}
    [X,A_0^{\frac{1}{2}}]=-[\nabla\omega(D_x),A_0^{\frac{1}{2}}]+[\nabla\Omega(D_y),A_0^{\frac{1}{2}}]=O(t^{-2+\frac{\delta}{2}}).
  \end{equation*}
  Together with the functional calculus of almost analytic extensions
  this implies that
  \begin{equation*}
    [J(X),A_0^{\frac{1}{2}}]=O(t^{-2+\frac{\delta}{2}}),
  \end{equation*}
  and hence using \eqref{eq:A0} and \eqref{eq:J} that
  \begin{equation}\label{eq:lemma6.4.ii}
    [J(\tfrac{x-y}{t}),A^{\frac{1}{2}}]=O(t^{-2+\frac{\delta}{2}}).
  \end{equation}
  Write $h=\Omega(D_y)+\omega(D_x)$. Note that
  \begin{align*}
    e^{\ri th}\bd_0A^{\frac{1}{2}}e^{-\ri th}&=e^{\ri th}[h,\ri A^{\frac{1}{2}}]e^{-\ri th}+e^{\ri th}(\tfrac{d}{dt}A^{\frac{1}{2}})e^{-\ri th}\\
    &=\tfrac{d}{dt}\bigl(e^{\ri th}A^{\frac{1}{2}}e^{-\ri th}\bigr)=\tfrac{d}{dt}A_0^{\frac{1}{2}}\\
    &=-\tfrac{1}{t}A_0^{\frac{1}{2}}-\tfrac{(2-\delta)t^{-\delta-1}}{2\bigl((\frac{x-y}{t})^2+t^{-\delta}\bigr)^{\frac12}},
  \end{align*}
  so
  \begin{equation}\label{eq:lemma6.4.iii}
    \bd_0A^{\frac{1}{2}}=-\tfrac{1}{t}A^{\frac{1}{2}}+O(t^{-1-\frac{\delta}{2}}).
  \end{equation}
  
  In addition
  \begin{equation}\label{eq:komRnulvi}
    [R_0,[X_i]]=R_0^{\frac{1}{2}+\rho_1}O(t\inv)R_0^{1-\rho_1}
  \end{equation}
  for any
  $\rho_1$, $0<\rho_1<\frac{1}{2}$ and that
  \begin{equation}\label{eq:komRnula}
    [R_0,[A^{\frac{1}{2}}]]=R_0^{\rho_2}O(t^{\frac{\delta}{2}-1})R_0^{1-\rho_2}
  \end{equation}
  for any $\rho_2$, $0<\rho_2<1$. The identity \eqref{eq:komRnula} can
  be seen e.g.\ by using \eqref{eq:komRnulvi} and the representation
  formula
  \begin{equation*}
    s^{-\frac{1}{2}}=\frac{1}{\pi}\int_0^\infty(s+y)\inv
    y^{-\frac{1}{2}}\ud y,
  \end{equation*}
  which can be verified for $s>0$ by direct computations.
  
  Let $J_1,J_2\in C_0^\infty(c_0<\abs{x}<c_1)$ such that $JJ_1=J$ and
  $J_1J_2=J_1$ and write for $i=1,\dotsc,\nu$: 
  \begin{equation*}
    B_{0,i}=R_0[J(\tfrac{x-y}{t})X_i]R_0+\hc
  \end{equation*}
  and
  \begin{equation}\label{eq:defB1}
    B_1=R_0[J_1(\tfrac{x-y}{t})A^{\frac12}J_1(\tfrac{x-y}{t})]R_0.
  \end{equation}
  
  We compute using \eqref{eq:lemma6.4.ii}, \eqref{eq:komRnulvi} and
  \eqref{eq:komRnula}:
  \begin{align*}
    \MoveEqLeft[1] B_{0,i}^2=4R_0[X_iJ(\tfrac{x-y}{t})]R_0^2[J(\tfrac{x-y}{t})X_i]R_0+O(t\inv)\\
&=4R_0^2[X_iJ^2(\tfrac{x-y}{t})X_i]R_0^2+O(t\inv)\\
    &\le CR_0^2[X_iJ_1^4(\tfrac{x-y}{t})X_i]R_0^2+Ct\inv\\
    &= CR_0^2[J_1^2(\tfrac{x-y}{t})X_i^2J_1^2(\tfrac{x-y}{t})]R_0^2+O(t\inv)\\
    &\le CR_0^2[J_1^2(\tfrac{x-y}{t})AJ_1^2(\tfrac{x-y}{t})]R_0^2+O(t^{-\delta})\\
    &=
    CR_0[J_1^2(\tfrac{x-y}{t})A^{\tfrac{1}{2}}]R_0^2[A^{\tfrac12}J_1^2(\tfrac{x-y}{t})]R_0+O(t^{-\min\{1-\frac{\delta}{2},\delta\}})\\
    &=
    CR_0[J_1(\tfrac{x-y}{t})A^{\tfrac{1}{2}}J_1(\tfrac{x-y}{t})]R_0^2[J_1(\tfrac{x-y}{t})A^{\tfrac12}J_1(\tfrac{x-y}{t})]R_0+O(t^{-\min\{1-\frac{\delta}{2},\delta\}})\\
    &=CB_1^2+O(t^{-\kappa}),
  \end{align*}
  where $\kappa=\min\{1-\frac{\delta}{2},\delta\}$. By the matrix
  monotonicity of $\lambda\mapsto \lambda^{\frac{1}{2}}$
  \cite[Sec. 2.2.2]{BR}, we deduce that
  \begin{equation}\label{eq:lemma6.4.iv}
    \abs{B_{0,i}}\le CB_1+Ct^{-\frac{\kappa}{2}}.
  \end{equation}
  
  Now let \begin{equation}\label{eq:improvedobservable}
    \Phi(t)=-\chi(H)[J(\tfrac{x-y}{t})A^{\frac{1}{2}}J(\tfrac{x-y}{t})]\chi(H)
  \end{equation}
  It follows from \eqref{eq:lemma6.4.ii} that
  \begin{equation}\label{eq:lemma6.4.i}
    \Phi(t)=-\chi(H)[J(\tfrac{x-y}{t})^2A^{\frac{1}{2}}]\chi(H)+O(t^{-2+\frac{\delta}{2}})
  \end{equation}
  is uniformly bounded for $t>1$.
  
  We compute
  \begin{align}\label{eq:improvedobservableheisenberg}
    \MoveEqLeft[1]-\bD\Phi(t)=\\&\chi(H)[V,\ri [J(\tfrac{x-y}{t})A^{\frac{1}{2}}J(\tfrac{x-y}{t})]]\chi(H)+
    \chi(H)\bigl[\bd_0\bigl(J(\tfrac{x-y}{t})A^{\frac{1}{2}}J(\tfrac{x-y}{t})\bigr)\bigr]\chi(H)\nonumber
  \end{align}
  Using Condition~\ref{cond:rho}~\eqref{cond:item:SR} we see that
  \begin{equation*}
    \chi(H)[V,\ri [J(\tfrac{x-y}{t})A^{\frac{1}{2}}J(\tfrac{x-y}{t})]]\chi(H)=O(t^{-1-\mu}). 
  \end{equation*}
Indeed,
  \begin{align*}
    \MoveEqLeft
    \chi(H)[V,\ri [J(\tfrac{x-y}{t})A^{\frac{1}{2}}J(\tfrac{x-y}{t})]]\chi(H)\\
    &=\chi(H)
    \Bigl(\begin{smallmatrix}0&0&\\-\ri J(\tfrac{x-y}{t})A^{\frac{1}{2}}J(\tfrac{x-y}{t})v&0
    \end{smallmatrix}\Bigr)
    \chi(H)+\hc\\
    &=\chi(H)(H_0-\lambda)R_0\Bigl(\begin{smallmatrix}0&0&\\-\ri \bigl(A^{\frac{1}{2}}J(\tfrac{x-y}{t})+O(t^{-2+\frac{\delta}{2}})\bigr)J(\tfrac{x-y}{t})v&0
    \end{smallmatrix}\Bigr)
    \chi(H)+\hc
  \end{align*}
  Now by Condition~\ref{cond:rho}~\eqref{cond:item:SR} we have that
  $\norm{J(\tfrac{x-y}{t})v}=O(t^{-1-\mu})$ and hence we also have
  that
  \begin{equation*}
    R_0\Bigl(\begin{smallmatrix}
      0&0&\\-\ri
      \bigl(A^{\frac{1}{2}}J(\tfrac{x-y}{t})+O(t^{-2+\frac{\delta}{2}})\bigr)J(\tfrac{x-y}{t})v&0
    \end{smallmatrix}\Bigr)=O(t^{-1-\mu}).
  \end{equation*}

  Note that 
  \begin{equation}
    \label{eq:mixedtermsheisenberg}
    \bd_0 J(\tfrac{x-y}{t})=-\tfrac{1}{t}\nabla J(\tfrac{x-y}{t})\cdot X+O(t^{-2})
  \end{equation}
  and using \eqref{eq:lemma6.4.iii} and \eqref{eq:lemma6.4.iv}
  (cf. \eqref{eq:defB1}),
  \begin{align*}
    \MoveEqLeft-\chi(H)[J(\tfrac{x-y}{t})(\bd_0 A^{\frac{1}{2}})J(\tfrac{x-y}{t})]\chi(H)\\&\ge\tfrac{C_0}{t}\chi(H)[\abs{J(\tfrac{x-y}{t})X_i+\hc}]\chi(H)-Ct^{-1-\frac{\kappa}{2}},
  \end{align*}
  where $\kappa$ is from \eqref{eq:lemma6.4.iv}. Again we compute using \eqref{eq:lemma6.4.ii}:
  \begin{align*}
    \MoveEqLeft[1]
    R_0[\nabla J(\tfrac{x-y}{t})\cdot X A^{\frac{1}{2}}J(\tfrac{x-y}{t})]R_0+\hc\\
    &=R_0[J_2(\tfrac{x-y}{t})X\cdot \nabla J(\tfrac{x-y}{t})J(\tfrac{x-y}{t})A^{\frac{1}{2}}J_2(\tfrac{x-y}{t})]R_0+\hc+O(t\inv)\\
    &=\sum_{i=1}^\nu R_0[J_2(\tfrac{x-y}{t})A^{\frac{1}{2}}X_iA^{-\frac{1}{2}}\partial_i J(\tfrac{x-y}{t})J(\tfrac{x-y}{t})A^{\frac{1}{2}}J_2(\tfrac{x-y}{t})]R_0+\hc+O(t\inv)\\
    &\le CR_0[J_2(\tfrac{x-y}{t})AJ_2(\tfrac{x-y}{t})]R_0+Ct\inv\\
    &\le
    CR_0[J_2(\tfrac{x-y}{t})X^2J_2(\tfrac{x-y}{t})]R_0+O(t^{-\min\{1,\delta\}})\\
    &\le
    CR_0[\inner{X}{J_2^2(\tfrac{x-y}{t})X}]R_0+Ct^{-1}.
  \end{align*}
  Hence (cf. \eqref{eq:mixedtermsheisenberg})
  \begin{align}
    \MoveEqLeft-\chi(H)\bigl[\bd_0\bigl(J(\tfrac{x-y}{t})A^{\frac{1}{2}}J(\tfrac{x-y}{t})\bigr)\bigr]\chi(H)\nonumber\\
    &=\chi(H)[(\bd_0J(\tfrac{x-y}{t}))A^{\frac{1}{2}}J(\tfrac{x-y}{t})]\chi(H)+\hc\nonumber\\
    &\quad+\chi(H)[J(\tfrac{x-y}{t})(\bd_0A^{\frac{1}{2}})J(\tfrac{x-y}{t})]\chi(H)\nonumber\\
    \begin{split}
    &\ge\tfrac{C_0}{t}\chi(H)[\abs{J(\tfrac{x-y}{t})X_i+\hc}]\chi(H)\\
    &\quad-\tfrac{C}{t}\chi(H)[\inner{X}{J_2^2(\tfrac{x-y}{t})X}]\chi(H)+O(t^{-1-\gamma})\label{eq:oldpropestimate}
  \end{split}
  \end{align}
  for some $\gamma>0$. Since by Theorem~\ref{thm:phase-space} the
  second term in the r.h.s. of \eqref{eq:oldpropestimate} is integrable
  along the evolution, the theorem follows from
  Lemma~\ref{lem:propest}.
\end{proof}

\begin{theorem}[Minimal velocity estimate]\label{thm:minvelest}
  Assume that $(P_0,\lambda_0)\in \R^{\nu+1}$ satisfies that
  $\lambda_0\in\R\setminus(\theta(P_0)\cup\sigma_{\mathrm{pp}}(P_0))$. Then
  there exists an $\epsilon>0$, a neighbourhood $N$ of
  $(P_0,\lambda_0)$ and a function $\chi\in C_0^\infty(\R^{\nu+1})$
  such that $\chi=1$ on $N$ and
  \begin{equation*}
    \int_1^\infty\norm[\Big]{\begin{pmatrix} 1 & 0 \\ 0 & \bbbone_{[0,\epsilon]}(\tfrac{\abs{x-y}}{t})\end{pmatrix}\,e^{-\ri
        tH}\chi(\uP,H) u}^2\frac{\ud t}{t}\le C\norm{u}^2
  \end{equation*}
\end{theorem}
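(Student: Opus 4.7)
The proof follows the standard Mourre-theoretic strategy for a minimal velocity propagation estimate. First, Theorem~\ref{thm:lowsemcontMourre} applied at $(P_0,\lambda_0)$ gives a neighbourhood $\cO\ni P_0$, a constant $\tilde C>0$, and a function $f\in C_0^\infty(\R)$ with $f\equiv 1$ near $\lambda_0$ realizing a uniform strict Mourre estimate $f(H(P))[H(P),\ri A_{P_0}]^\circ f(H(P))\ge\tilde Cf(H(P))^2$ for all $P\in\cO$. One chooses $\chi\in C_0^\infty(\R^{\nu+1})$ with $\chi\equiv 1$ on a neighbourhood $N$ of $(P_0,\lambda_0)$ and $\supp\chi\subset\cO\times\{f\equiv 1\}$; via the fibered functional calculus this lifts to the global Mourre bound $\chi(\uP,H)[H,\ri A_{P_0}]^\circ\chi(\uP,H)\ge\tilde C\chi(\uP,H)^2$.

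The propagation observable is then $\Phi(t)=\chi(\uP,H)[R(a_{P_0}/t)]\chi(\uP,H)$, where $R\in C^\infty(\R)$ is bounded and non-decreasing with $R(0)=0$, $R'=1$ on $[0,\tfrac{\epsilon}{2}]$ and $R'=0$ on $[\epsilon,\infty)$, for some small $\epsilon>0$ to be fixed. Computing $\bD\Phi$ via Helffer--Sj\"o{}strand calculus, as in the proofs of Theorems~\ref{thm:largevel}--\ref{thm:improvedphase-space}, and absorbing the interaction commutator $O(t^{-1-\mu})$ from \eqref{eq:SR}, one isolates the leading term
\begin{equation*}
\tfrac{1}{t}\chi(\uP,H)\bigl[R'\bigl(\tfrac{a_{P_0}}{t}\bigr)\bigl(\ri[h_0(P_0),a_{P_0}]-\tfrac{a_{P_0}}{t}\bigr)\bigr]\chi(\uP,H)+O(t^{-1-\gamma}),
\end{equation*}
where $h_0(P_0)$ denotes the kinematic part of $H(P_0)$. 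Theorem~\ref{thm:C2} identifies the diagonal of $\ri[h_0(P_0),a_{P_0}]$ as $\abs{v_{P_0}(D_x)}^2$; combining this with the Mourre estimate (modulo absorbing the off-diagonal $\ket{\ri a_{P_0}\rho}$ coupling into the energy localization, parallel to the proof of Theorem~\ref{thm:Mourre}) and the pointwise bound $\abs{R'(s)\,s}\le\epsilon R'(s)$ on $\supp R'$, the choice $\epsilon<\tilde C/2$ yields $\bD\Phi\ge\tfrac{\tilde C/2}{t}\chi(\uP,H)[R'(a_{P_0}/t)]\chi(\uP,H)+O(t^{-1-\gamma})$. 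Lemma~\ref{lem:propest} then produces
\begin{equation*}
\int_1^\infty\bigl\|[R'(\tfrac{a_{P_0}}{t})^{1/2}]e^{-\ri tH}\chi(\uP,H)u\bigr\|^2\tfrac{\ud t}{t}\le C\norm{u}^2.
\end{equation*}

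Since $R'\ge\bbbone_{[0,\epsilon/2]}$, this controls $[\bbbone_{[0,\epsilon/2]}(a_{P_0}/t)]$, and it remains to convert this spectral localization of $a_{P_0}/t$ into the spatial localization $\bbbone_{[0,\epsilon']}(\abs{x-y}/t)$. Since $a_{P_0}=\tfrac{1}{2}(v_{P_0}(D_x)\cdot x+x\cdot v_{P_0}(D_x))$ with $v_{P_0}$ uniformly bounded (say by $M$), the symbolic estimate $\abs{a_{P_0}}\le M\abs{x}+O(1)$ lifts, via Helffer--Sj\"o{}strand functional calculus together with the phase-space bounds of Theorems~\ref{thm:phase-space}--\ref{thm:improvedphase-space} for the commutator corrections, to an operator inequality $[\bbbone_{[0,\epsilon']}(\abs{x-y}/t)]\le C[R'(a_{P_0}/t)]+\text{integrable error}$ for $\epsilon'<\epsilon/(4M)$. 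The vacuum-sector block (the ``$1$'' in the top-left of the matrix) is handled by the trivial dynamics on the vacuum component of $\chi(\uP,H)u$. I expect the main technical obstacles to be (i) absorbing the off-diagonal $\ket{\ri a_{P_0}\rho}$ contributions of the Mourre commutator of Theorem~\ref{thm:C2} into the energy cut-off $\chi(\uP,H)$, mirroring the compact-operator absorption in the proof of Theorem~\ref{thm:Mourre}, and (ii) the final spectral-to-spatial conversion, which hinges delicately on the bounded-symbol structure of $v_{P_0}$ and careful pseudo-differential comparisons.
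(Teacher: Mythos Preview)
Your approach differs structurally from the paper's. You take an HSS-style observable $\chi(\uP,H)[R(a_{P_0}/t)]\chi(\uP,H)$, a function of the rescaled conjugate operator, whereas the paper uses $\int^\oplus\chi(P,H(P))\,Q(t)\tfrac{A_{P_0}}{t}Q(t)\,\chi(P,H(P))\,\ud P$ with the \emph{position} cutoff $Q(t)=\bigl(\begin{smallmatrix}1&0\\0&q(x/t)\end{smallmatrix}\bigr)$ sandwiching $A_{P_0}/t$. In the paper's scheme the spatial localization is built in from the start, so no conversion is needed at the end: the Heisenberg derivative acting on $q(x/t)$ is controlled by the phase-space estimate (Theorem~\ref{thm:phase-space}); the commutator $[V,\ri Q(t)]$ is genuinely $O(t^{-1-\mu})$ via the short-range condition, because $1-q(\cdot/t)$ is supported away from the origin; and after commuting $Q(t)$ past $\chi(P,H(P))$ (an $O(t^{-1})$ error) one applies the Mourre bound directly to obtain the lower bound in terms of $\chi Q(t)^2\chi$.

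Your route has a real gap at the final spectral-to-spatial conversion. The symbolic bound $|a_{P_0}|\le M|x|+O(1)$ does \emph{not} yield an operator inequality $\bbbone_{[0,\epsilon']}(|x|/t)\le CR'(a_{P_0}/t)+\text{small}$: knowing that $q(x/t)(a_{P_0}/t)q(x/t)$ has small norm does not force $q(x/t)\bigl(1-R'(a_{P_0}/t)\bigr)q(x/t)$ to be small (a $2\times2$ example with $q$ a rank-one projection and $a_{P_0}/t$ purely off-diagonal already fails), and Theorems~\ref{thm:phase-space}--\ref{thm:improvedphase-space} do not help since they live on annuli $\{c_0\le|x-y|/t\le c_1\}$ bounded away from the origin. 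Two smaller corrections: the interaction commutator $[V,\ri[R(a_{P_0}/t)]]$ is \emph{not} $O(t^{-1-\mu})$ via \eqref{eq:SR}---there is no position cutoff present---though the correct mechanism, the $C^2(A_{P_0})$ regularity of $H(P)$ from Theorem~\ref{thm:C2}, does make the commutator-expansion remainder $O(t^{-2})$. And the vacuum block is not handled by ``trivial dynamics'' (the sectors are coupled by $V$): one must work with $R'(A_{P_0}/t)$, which equals $R'(0)=1$ on the vacuum, rather than $[R'(a_{P_0}/t)]$, which vanishes there.
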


\begin{proof}
  By Theorem~\ref{thm:lowsemcontMourre}, it follows that there exists
  a neighbourhood $\cO$ of $P_0$ and a function $f$ with $f=1$ in a
  neighbourhood of $\lambda_0$ such that
  \begin{equation}\label{eq:P-cont-mourre}
    f(H(P))[H(P),\ri A_{P_0}]^\circ f(H(P))\ge Cf^2(H(P))
  \end{equation}
  for all $P$ in $\cO$. Let $\chi\in
  C_0^\infty(\R^{\nu+1};[0,1])$ be supported in $\cO\times
  \{\lambda\,|\,f(\lambda)=1\}$ and $\chi=1$ in a neighbourhood $N$ of
  $(P_0,\lambda_0)$. It follows that
  \begin{equation}\label{eq:globalmourre}
    \chi(P,H(P))[H(P),\ri A_{P_0}]^\circ\chi(P,H(P))\ge\tfrac{C}{2}\chi^2(P,H(P)).
  \end{equation}
  Let $q\in C_0^\infty(\{\abs{x}\le2\epsilon\})$ satisfy $0\le q\le1$,
  $q=1$ in a neighbourhood of $\{\abs{x}\le\epsilon\}$ for some
  $\epsilon>0$ to be specified later on. Write
  \begin{equation*}
    Q(t)=\begin{pmatrix}1&0\\0&q(\tfrac{x}{t})
    \end{pmatrix}.
  \end{equation*}
  Let
  \begin{equation*}
    \Phi(t)=\!\int^\oplus\!\!\chi(P,H(P))Q(t)\frac{A_{P_0}}{t}Q(t)\chi(P,H(P))\ud P.
  \end{equation*}
  Taking into account the support of $q$ and that $v_{P_0}$ is
  $\omega$-bounded, and using pseudo-differential calculus, it is easy
  to see that $\Phi(t)$ is uniformly bounded.

  We compute the Heisenberg derivative:
  \begin{align*}
    \bD\Phi(t)\MoveEqLeft[1]=\!\int^\oplus\!\!\chi(P,H(P))[\bd_{0}^Pq(\tfrac{x}{t})]\frac{A_{P_0}}{t}Q(t)\chi(P,H(P))\ud
    P+\hc\\
    &+\!\int^\oplus\!\!\chi(P,H(P))[V,\ri Q(t)]\frac{A_{P_0}}{t}Q(t)\chi(P,H(P))\ud
    P+\hc\\
    &+\frac{1}{t}\!\int^\oplus\!\!\chi(P,H(P))Q(t)[H(P),\ri A_{P_0}]Q(t)\chi(P,H(P))\ud
    P\\
    &-\frac{1}{t}\!\int^\oplus\!\!\chi(P,H(P))Q(t)\frac{A_{P_0}}{t}Q(t)\chi(P,H(P))\ud
    P\\
    \MoveEqLeft[1]=R_1+R_2+R_3+R_4,
  \end{align*}
  where $\bd_0^P=[\Omega(P-D_x)+\omega(D_x),\,\cdot\,]+\frac{\ud}{\ud t}$.

  By the same arguments as before it follows that
  $\frac{A_{P_0}}{t}Q(t)\chi(P,H(P))$ is uniformly bounded. Using
  pseudo-differential calculus gives
  \begin{align*}
    \MoveEqLeft[1] R_1=\\
    &-\frac{1}{t}\!\int^\oplus\!\!\chi(P,H(P))\bigl[\inner{\tfrac{x}{t}-\nabla\omega(D_x)+\nabla\Omega(P-D_x)}{\nabla
      q(\tfrac{x}{t})}\bigr]\frac{A_{P_0}}{t}Q(t)\chi(P,H(P))\ud
    P\\
    &\quad+\hc+O(t^{-2}).
  \end{align*}
  Let
  \begin{equation*}
    B_1=-\int^\oplus\!\!\chi(P,H(P))[\inner{\tfrac{x}{t}-\nabla\omega(D_x)+\nabla\Omega(P-D_x)}{\nabla
      q(\tfrac{x}{t})}]\ud
    P
  \end{equation*}
  and
  \begin{equation*}
    B_2=\!\int^\oplus\!\!\chi(P,H(P))Q(t)\frac{A_{P_0}}{t}\ud
    P.
  \end{equation*}
  Then 
  \begin{equation*}
    R_1=\tfrac{1}{t}B_1B_2^*+\tfrac{1}{t}B_2B_1^*\ge-\epsilon_0\inv\tfrac{1}{t}B_1B_1^*-\epsilon_0\tfrac{1}{t}B_2B_2^*.
  \end{equation*}
  Now by Theorem~\ref{thm:phase-space}, we get that
  $\tfrac{1}{t}B_1B_1^*$ is integrable along the evolution.  Using
  pseudo-differential calculus and functional calculus of almost
  analytic extensions one can verify that
  \begin{equation}\label{eq:commutatorchiQ}
    [\chi(P,H(P)),Q(t)]=(H_0(P)-R)^{-1+\rho}O(t^{-1})(H_0(P)-R)^{-\tfrac{1}{2}-\rho}
  \end{equation}
  for any $R\in\R\setminus\sigma(H_0(P))$ and any $\rho$,
  $0\le\rho\le\tfrac{1}{2}$. Hence it follows by introducing cutoff
  functions $\tilde\chi\in C_0^\infty(\R^{\nu+1})$ and $\tilde q\in
  C_0^\infty(\R^\nu)$ with $\tilde\chi\chi=\chi$ and $\tilde qq=q$
  that
  \begin{align}
    \MoveEqLeft-\tfrac{1}{t}B_2B_2^*=-\frac{1}{t}\!\int^\oplus\!\!Q(t)\chi\tilde\chi(P,H(P))[\tilde
    q(\tfrac{x}{t})]\frac{A_{P_0}^2}{t^2}[\tilde
    q(\tfrac{x}{t})]\tilde\chi\chi(P,H(P))Q(t)\ud
    P\nonumber\\&+O(t^{-2})\nonumber\\
    \MoveEqLeft[1]\ge-\frac{C_1}{t}\!\int^\oplus\!\!Q(t)\chi^2(P,H(P))Q(t)\ud
    P+O(t^{-2})\nonumber\\
    \MoveEqLeft[1]=-\frac{C_1}{t}\!\int^\oplus\!\!\chi(P,H(P))Q^2(t)\chi(P,H(P))\ud
    P+O(t^{-2})\label{eq:B_2}
  \end{align}
  By Condition~\ref{cond:rho}~\eqref{cond:item:SR} it follows that
  $\Bigl(
  \begin{smallmatrix}0&0\\
    \ri (1-q(\tfrac{x}{t}))\left|\rho\right\rangle&0
  \end{smallmatrix}\Bigr)\in O(t^{-1-\mu})$
  and hence 
  \begin{equation}
    \label{eq:R_2}
    R_2\in O(t^{-1-\mu})
  \end{equation}
  Using \eqref{eq:globalmourre} and \eqref{eq:commutatorchiQ} twice, we
  see that
  \begin{align}
    R_3&=\frac{1}{t}\!\int^\oplus\!\!Q(t)\chi(P,H(P))[H(P),\ri A_{P_0}]\chi(P,H(P))Q(t)\ud
    P+O(t^{-2})\nonumber\\
    &\ge
    \frac{C_2}{t}\!\int^\oplus\!\!Q(t)\chi^2(P,H(P))Q(t)\ud P+O(t^{-2})\nonumber\\
    &\ge
    \frac{C_2}{t}\!\int^\oplus\!\!\chi(P,H(P))Q(t)^2\chi(P,H(P))\ud
    P+O(t^{-2}).\label{eq:R_3}
  \end{align}
  Again using the cutoff functions and pseudo-differential calculus
  and taking into account the support of $q$, we see that
  \begin{align*}
    \MoveEqLeft\pm\chi(P,H(P))Q(t)\frac{A_{P_0}}{t}Q(t)\chi(P,H(P))\\
    &=\pm Q(t)\chi\tilde\chi(P,H(P))[\tilde
    q(\tfrac{x}{t})]\frac{A_{P_0}}{t}[\tilde
    q(\tfrac{x}{t})]\tilde\chi\chi(P,H(P))Q(t)\pm O(t\inv)\\
    &\le \epsilon C_3Q(t)\chi^2(P,H(P))Q(t)+O(t\inv)\\
    &=\epsilon C_3\chi(P,H(P))Q(t)^2\chi(P,H(P))+O(t\inv)
  \end{align*}
  so 
  \begin{equation}
    \label{eq:R_4}
    R_4\ge-\frac{C_3\epsilon}{t}\!\int^\oplus\!\!\chi(P,H(P))Q(t)^2\chi(P,H(P))\ud P+O(t^{-2}).
  \end{equation}
  Putting \eqref{eq:B_2}, \eqref{eq:R_2}, \eqref{eq:R_3} and
  \eqref{eq:R_4} together, we see that
  \begin{align*}
    \bD\Phi(t)\MoveEqLeft\ge\frac{-\epsilon_0C_1+C_2-\epsilon
      C_3}{t}\!\int^\oplus\!\!\chi(P,H(P))Q(t)^2\chi(P,H(P))\ud
    P\\
    &-\frac{1}{\epsilon t}B_1B_1^*+O(t^{-1-\mu}).
  \end{align*}
  Now choosing $\epsilon$ and $\epsilon_0$ so small that
  $-\epsilon_0C_1+C_2-\epsilon C_3>0$ together with
  Lemma~\ref{lem:propest} yields the result.
\end{proof}

\section[Asymptotic observable and asymptotic completeness]{The
  asymptotic observable and asymptotic
  completeness}\label{sec:asympobs}
Recall the notation $[\,\cdot\,]$ from \eqref{eq:bracketnotation}.
\begin{theorem}[Asymptotic observable]\label{thm:asympobs}
  Let $p\in C^\infty(\R^\nu)$ satisfy that $p(x)\le p(y)$ for
  $\abs{x}\le\abs{y}$, $p(x)=0$ for $\abs{x}\le\tfrac{1}{2}$ and
  $p(x)=1$ for $\abs{x}\ge1$. Define
  $p_\delta(x)=p(\tfrac{x}{\delta})$. Then the limits
  \begin{align}
    P^+_\delta(H)&=\slim_{t\to\infty} e^{\ri tH}[p_\delta(\tfrac{x-y}{t})]e^{-\ri tH},\label{eq:Pplusdelta}\\
    P^+_0(H)&=\slim_{\delta\to0}P^+_\delta(H),\label{eq:Pplus}\\
    P^+_\delta(H_0,H)&=\slim_{t\to\infty}
    e^{\ri tH}[p_\delta(\tfrac{x-y}{t})]e^{-\ri tH_0},\nonumber\\
    P^+_\delta(H,H_0)&=\slim_{t\to\infty}
    e^{\ri tH_0}[p_\delta(\tfrac{x-y}{t})]e^{-\ri tH}\nonumber
  \end{align}
  exist and $P^+_0(H)$ is a projection. 
\end{theorem}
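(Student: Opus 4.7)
The plan is to implement a Cook-method (propagation observable) argument with the observable $B(t)=[p_\delta(\tfrac{x-y}{t})]$, leveraging the propagation estimates of Section~\ref{sec:propest} in the style of Dereziński--Gérard. The main technical obstacle is that Theorem~\ref{thm:improvedphase-space} provides only form-integrability of the Heisenberg derivative of $B(t)$, not norm integrability, which is a priori too weak for strong convergence of the sandwiched observable. I bridge this gap by exploiting that $[p_\delta]$ takes values in $[0,I]$: weak convergence of $(u,\Phi(t)u)$ plus convergence of $\|\Phi(t)u\|^2$ (obtained by rerunning the argument with $p_\delta^2$ in place of $p_\delta$) then gives strong convergence.

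For \eqref{eq:Pplusdelta}, set $\Phi(t)=e^{\ri tH}B(t)e^{-\ri tH}$ and split $\bD B(t)=[\bd_0 p_\delta(\tfrac{x-y}{t})]+\ri[V,[p_\delta(\tfrac{x-y}{t})]]$. The $V$-commutator is norm-$O(t^{-1-\mu})$ by Condition~\ref{cond:rho}\eqref{cond:item:SR} and \eqref{eq:SR}, hence integrable. Pseudo-differential calculus yields
\begin{equation*}
  \bd_0 p_\delta(\tfrac{x-y}{t})=-\tfrac{1}{2t}\sum_{i=1}^\nu\bigl(X_i\partial_ip_\delta(\tfrac{x-y}{t})+\hc\bigr)+O(t^{-2}),
\end{equation*}
with $X=\tfrac{x-y}{t}-\nabla\omega(D_x)+\nabla\Omega(D_y)$. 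Since $\partial_ip_\delta$ already lies in $C_0^\infty(\tfrac{\delta}{2}<|\cdot|<\delta)$, Theorem~\ref{thm:improvedphase-space} applies directly with $J=\partial_ip_\delta$, and dominating the self-adjoint $[X_i\partial_ip_\delta+\hc]$ by its absolute value shows that $\tfrac{1}{t}|(\chi(H)u,e^{\ri tH}[X_i\partial_ip_\delta+\hc]e^{-\ri tH}\chi(H)u)|$ is integrable in $t$, so $(u,\chi(H)\Phi(t)\chi(H)u)$ is Cauchy. Rerunning the argument with $p_\delta^2$ in place of $p_\delta$ (its derivative is also supported in the annulus) gives convergence of $\|\Phi(t)\chi(H)u\|^2=(u,\chi(H)e^{\ri tH}[p_\delta^2(\tfrac{x-y}{t})]e^{-\ri tH}\chi(H)u)$. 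Weak convergence via polarization together with norm convergence yields strong convergence on $\chi(H)\cH$, and uniform boundedness of $B(t)$ together with density extends this to all of $\cH$.

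For \eqref{eq:Pplus} and the projection property, monotonicity of $p$ in $|\cdot|$ gives $p_{\delta_1}\ge p_{\delta_2}$ pointwise whenever $\delta_1\le\delta_2$, so $\delta\mapsto P^+_\delta(H)$ is operator-monotone decreasing in $\delta$ with values in $[0,I]$, and Vigier's theorem furnishes the strong limit $P^+_0(H)$. For idempotency, whenever $2\delta_1\le\delta_2$ one has $p_{\delta_1}\equiv 1$ on $\mathrm{supp}\,p_{\delta_2}\subseteq\{|x|\ge\delta_2/2\}$, hence $p_{\delta_1}p_{\delta_2}=p_{\delta_2}$ pointwise and $[p_{\delta_1}][p_{\delta_2}]=[p_{\delta_2}]$. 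Since strong limits of uniformly bounded families multiply, $P^+_{\delta_1}(H)P^+_{\delta_2}(H)=P^+_{\delta_2}(H)$; letting $\delta_1\to 0$ then $\delta_2\to 0$ produces $P^+_0(H)^2=P^+_0(H)$, and self-adjointness is preserved under strong limits, so $P^+_0(H)$ is an orthogonal projection.

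For the mixed observables $P^+_\delta(H_0,H)$ and $P^+_\delta(H,H_0)$, apply Cook to $\Psi(t)=e^{\ri tH}B(t)e^{-\ri tH_0}$, obtaining $\tfrac{\ud}{\ud t}\Psi(t)=e^{\ri tH}(\bD B(t)+\ri[p_\delta]V)e^{-\ri tH_0}$. The extra term $\ri[p_\delta]V$ is norm-$O(t^{-1-\mu})$ by the same short-range argument. The $\bD B(t)$ piece is handled by Cauchy--Schwarz in matrix elements followed by Cauchy--Schwarz in $\tau$, which splits the burden between the $H$- and $H_0$-evolutions. The $H$-side is controlled by Theorem~\ref{thm:improvedphase-space} after an energy cutoff $\chi(H)$; the $H_0$-side satisfies the analogous form-integrability propagation estimate by the same proof (in fact more easily, since $e^{\ri\tau H_0}Xe^{-\ri\tau H_0}=\tfrac{x-y}{\tau}$ on the two-particle sector makes the Heisenberg derivative essentially explicit). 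The weak-plus-norm argument from the first part then promotes this to strong convergence of $\Psi(t)u$, and the $P^+_\delta(H,H_0)$ case is symmetric.
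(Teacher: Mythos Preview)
Your argument for the existence of $P^+_\delta(H)$ has a genuine gap at the step ``weak convergence via polarization together with norm convergence yields strong convergence.'' The implication $\psi_t\rightharpoonup\psi$ together with $\|\psi_t\|\to c$ gives $\psi_t\to\psi$ \emph{only if} $c=\|\psi\|$, and you have not verified this. What your $p_\delta^2$ rerun shows is that $\|\Phi(t)\chi(H)u\|^2$ converges to $(u,\chi(H)Q^w\chi(H)u)$, where $Q^w$ is the \emph{weak} limit of $e^{\ri tH}[p_\delta^2]e^{-\ri tH}$; there is no a~priori reason that this equals $\|P^w\chi(H)u\|^2$ with $P^w$ the weak limit of $\Phi(t)$, since weak limits do not respect products. (One does get $\|P^w\chi(H)u\|^2\le(u,\chi(H)Q^w\chi(H)u)\le(u,\chi(H)P^w\chi(H)u)$ from lower semicontinuity and $p_\delta^2\le p_\delta$, but the chain does not close.) The same defect recurs in your mixed-observable argument, where you again invoke the weak-plus-norm step.

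The fix is already latent in your own treatment of the mixed case. Your form estimate factors as
\[
\bigl|\bigl(\psi_2,\chi(H)[\bd_0p_\delta(\tfrac{x-y}{t})]\chi(H)\psi_1\bigr)\bigr|
\le\sum_{i}\|B_i(t)\psi_2\|\,\|B_i(t)\psi_1\|,
\qquad B_i(t)=\tfrac{1}{\sqrt t}\bigl[\bigl|\partial_ip_\delta(\tfrac{x-y}{t})X_i+\hc\bigr|\bigr]^{\frac12}\chi(H),
\]
and Theorem~\ref{thm:improvedphase-space} gives $\int_1^\infty\|B_i(t)e^{-\ri tH}\psi\|^2\ud t\le C\|\psi\|^2$ \emph{uniformly in $\psi$}. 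After your two Cauchy--Schwarz steps this yields $\bigl|(\psi_2,(\Phi(t)-\Phi(s))\psi_1)\bigr|\le C\|\psi_2\|\cdot o_{s,t}(1)$; taking the supremum over $\|\psi_2\|=1$ gives $\|(\Phi(t)-\Phi(s))\psi_1\|\to0$, i.e.\ strong Cauchy directly, with no detour through weak limits. This is exactly Lemma~\ref{lem:asympobs}, which the paper simply cites. Once you make this one-line upgrade, the $p_\delta^2$ rerun becomes unnecessary.

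Your handling of \eqref{eq:Pplus} and the projection property via Vigier and the pointwise identity $p_{\delta_1}p_{\delta_2}=p_{\delta_2}$ for $2\delta_1\le\delta_2$ is correct and in fact a bit cleaner than the paper's route through weak limits and Lemma~\ref{lem:proj}.
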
 

\begin{remark}\label{rem:asympobs}
  Note that $\delta\mapsto P^+_\delta(H)$ is increasing in the sense that
  $P^+_\delta(H)\le P^+_{\delta'}(H)$ for $0<\delta'<\delta$. We leave
  it to the reader to verify that the definition of $P^+_0(H)$ is
  independent of the choice of $p$, and that one in fact could have
  chosen any family of functions $\{p_\delta\}$ satisfying
  $p_\delta(x)\le p_\delta(y)$ for $\abs{x}\le\abs{y}$,
  $p_\delta(x)=0$ for $\abs{x}\le\tfrac{\delta}{2}$ and
  $p_\delta(x)=1$ for $\abs{x}\ge\delta$.
\end{remark}

\begin{proof} We will prove the statements about $P^+_\delta(H)$ and
  $P^+_0(H)$. The statements about $P^+_\delta(H_0,H)$ and $P^+_\delta(H,H_0)$ are
  proved completely analogously to that of $P^+_\delta(H)$.

  Let
  \begin{equation*}
    \Phi(t)=-\chi(H)[p_\delta(\tfrac{x-y}{t})]\chi(H),
  \end{equation*}
  and calculate using pseudo-differential calculus
  \begin{equation*}
    \bd_0p_\delta(\tfrac{x-y}{t})=-\tfrac{1}{2}\tfrac{1}{t}\Bigl(\bigl(\tfrac{x-y}{t}-\nabla\omega(D_x)+\nabla\Omega(D_y)\bigr)\cdot\nabla
    p_\delta(\tfrac{x-y}{t})+\hc\Bigr)+O(t^{-2}).
  \end{equation*}
  This in combination with
  Condition~\ref{cond:rho}~\eqref{cond:item:SR} gives
  \begin{equation*}
    \bD\Phi(t)=\tfrac{1}{t}\chi(H)[\tfrac{1}{2}X\cdot\nabla
    p_\delta(\tfrac{x-y}{t})+\hc]\chi(H)+O(t^{-\min\{1+\mu,2\}}),
  \end{equation*}
  where $X=\tfrac{x-y}{t}-\nabla\omega(D_x)+\nabla\Omega(D_y)$, so
  Theorem~\ref{thm:improvedphase-space} in combination with
  Lemma~\ref{lem:asympobs} gives the existence of the limit
  \eqref{eq:Pplusdelta}.

  The existence of the weak limit $\w
  P^+_0(H)=\wlim_{\delta\to0}P^+_\delta(H)$ is obvious. Moreover, for every
  $\delta>0$, it is clear from Lemma~\ref{lem:proj} that the strong
  limit $\displaystyle\slim_{n\to\infty}P^+_{\frac{\delta}{2^n}}(H)$ exists, is a
  projection and equals $\w P^+_0(H)$. The inequality $P^+_\delta(H)^2\le
  P^+_\delta(H)$ implies
  \begin{align*}
    \lim_{\delta\to0}\norm[\big]{(\w
      P^+_0(H)-P^+_\delta(H))u}^2&=\lim_{\delta\to0}\inner[\big]{(\w P^+_0(H)+P^+_\delta(H)^2-2P^+_\delta(H))u}{u}\\
    &\le\lim_{\delta\to0}\inner[\big]{(\w P^+_0(H)-P^+_\delta(H))u}{u}=0.
  \end{align*}
  This finishes the argument.
\end{proof}

\begin{proposition}\label{prop:SigmaTheta}
  Let $\Sigma=
  \{(P,\lambda)\in\R^{\nu+1}\,|\,\lambda\in\sigma_\mathrm{pp}(H(P))\}$
  denote the set in energy-momentum space consisting of eigenvalues
  for the fibered Hamiltonian and
  $\Theta=\{(P,\lambda)\in\R^{\nu+1}\,|\,\lambda\in\theta(P)\}$
  the corresponding set of thresholds. Then $\Sigma\cup\Theta$ is a
  closed set of Lebesgue measure $0$. Moreover,
  $(\Sigma\cup\Theta)(P)=\sigma_{\mathrm{pp}}(P)\cup\theta(P)$ is at
  most countable.
\end{proposition}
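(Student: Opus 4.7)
My plan is to establish countability of each slice $(\Sigma\cup\Theta)(P)$ first, then show that $\Sigma\cup\Theta$ is closed in $\R^{\nu+1}$, and finally conclude the measure-zero statement via Fubini.

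First I treat the slices. That $\theta(P)$ is locally finite (hence at most countable) for every $P$ is already indicated in the paper: rotation invariance forces $\nabla\Omega(P-k)$ to be collinear with $P-k$ and $\nabla\omega(k)$ with $k$, so the constraint $\nabla\Omega(P-k)=\nabla\omega(k)$ localises $k$ on the ray through $P$ and, by real analyticity of $\Omega$ and $\omega$, the resulting one-dimensional equation has a discrete zero set. For eigenvalues lying outside $\theta(P)$, Theorem~\ref{thm:Mourre} combined with the $C^1(A_P)$ regularity from Theorem~\ref{thm:C2} and the standard Virial theorem ensures that $\sigma_{\mathrm{pp}}(H(P))\setminus\theta(P)$ has no accumulation points. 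Together this yields countability of $(\Sigma\cup\Theta)(P)=\sigma_{\mathrm{pp}}(H(P))\cup\theta(P)$.

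Next I would prove closedness by contradiction. Suppose $(P_n,\lambda_n)\in\Sigma\cup\Theta$ converges to $(P_0,\lambda_0)$ with $\lambda_0\notin\sigma_{\mathrm{pp}}(H(P_0))\cup\theta(P_0)$. Theorem~\ref{thm:lowsemcontMourre} then supplies a neighbourhood $\cO\ni P_0$ and a cutoff $f\in C_0^\infty(\R)$ with $f=1$ near $\lambda_0$ such that
\begin{equation*}
f(H(P))[H(P),\ri A_{P_0}]^\circ f(H(P))\ge Cf^2(H(P))\quad\text{for all }P\in\cO.
\end{equation*}
For $n$ large, $P_n\in\cO$ and $f(\lambda_n)=1$. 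If $\lambda_n\in\sigma_{\mathrm{pp}}(H(P_n))$ with normalized eigenvector $\psi_n$, the Virial theorem forces $\inner{\psi_n}{[H(P_n),\ri A_{P_0}]^\circ\psi_n}=0$, contradicting the Mourre bound since $f(H(P_n))\psi_n=\psi_n$. If instead $\lambda_n\in\theta(P_n)$, write $\lambda_n=\Omega(P_n-k_n)+\omega(k_n)$ with $\nabla\Omega(P_n-k_n)=\nabla\omega(k_n)$; coercivity of $\Omega$ from Condition~\ref{cond:Omega}(i) (or of $\omega$ from Condition~\ref{cond:omega}(ii) when $s_\Omega=0$) forces $\{k_n\}$ to be bounded, and passing to a convergent subsequence $k_n\to k_0$ combined with continuity of the dispersion relations and their gradients yields $\lambda_0\in\theta(P_0)$, again a contradiction. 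Hence $\Sigma\cup\Theta$ is closed.

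Since $\Sigma\cup\Theta$ is closed it is Borel, and Fubini combined with slice countability gives
\begin{equation*}
\abs{\Sigma\cup\Theta}=\int_{\R^\nu}\abs{(\Sigma\cup\Theta)(P)}\,\ud P=0.
\end{equation*}
The main obstacle is the closedness step, since one must simultaneously rule out accumulation of eigenvalues and accumulation of threshold energies at a limit point that lies in neither set. The $P$-uniform Mourre estimate from Theorem~\ref{thm:lowsemcontMourre} is tailor-made for the first possibility via the Virial theorem, while coercivity of the dispersion relations handles the second; both inputs are in place, so the remaining work is just to combine them cleanly.
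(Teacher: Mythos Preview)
Your proposal is correct and follows essentially the same route as the paper: countability of each slice via Mourre plus the Virial theorem together with local finiteness of $\theta(P)$, closedness of $\Sigma\cup\Theta$ via the $P$-uniform strict Mourre estimate of Theorem~\ref{thm:lowsemcontMourre} combined with the Virial theorem, and closedness of $\Theta$ separately. The only cosmetic differences are that the paper argues the complement is open rather than by contradiction, states ``$\Theta$ is closed'' without spelling out the coercivity argument you give, and leaves the Fubini step implicit; your version makes these points explicit, which is fine.
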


\begin{proof}By the usual arguments, Theorems~\ref{thm:C2} and
  \ref{thm:Mourre} imply that eigenvalues of $H(P)$ can only
  accumulate at thresholds (see e.g. \cite{ABG} for details), and by
  analyticity, the threshold set $\theta(P)$ is at most
  countable. Hence, if $\Sigma\cup\Theta$ is closed, it is in particular
  of measure $0$. 

  Let $(P_0,\lambda_0)\not\in\Sigma\cup\Theta$. Then by
  Theorem~\ref{thm:lowsemcontMourre}, there are neighbourhoods $\cO$ of
  $P_0$ and $I$ of $\lambda_0$ such that for all $P\in \cO$, a strict
  Mourre estimate holds for $H(P)$ on the energy interval $I$ with
  conjugate operator $A_{P_0}$ given as in Theorem~\ref{thm:Mourre}
  and $H(P)$ is of class $C^2(A_{P_0})$ by Theorem~\ref{thm:C2}, which
  by the Virial Theorem implies that there are no eigenvalues for
  $H(P)$ in $I$ for any $P\in \cO$. Clearly,
  \begin{equation*}
    \Theta=\{(P,\lambda)\in\R^{\nu+1}\,|\,\exists k\in\R^\nu\colon\lambda=\Omega(P-k)+\omega(k),\nabla\omega(k)-\nabla\Omega(P-k)=0\}
  \end{equation*}
  is a closed set. Hence, possibly after chosing smaller $\cO$ and $I$,
  $\cO\times I$ is a neighbourhood of $(P_0,\lambda_0)$ which does not
  intersect $\Sigma\cup\Theta$.
\end{proof}

Let $\cH_\bnd=E_{\Sigma\cup\Theta}((\uP,H))\cH$ and similarly
$\cH_{0,\bnd}=E_{\Sigma_0\cup\Theta}((\uP,H_0))\cH$, where we by
$E_\cB(\uP,H)$ resp. $E_\cB(\uP,H_0)$ denote the spectral projection
for the pair of commuting, self-adjoint operators of some Borel set
$\cB\in\R^{\nu+1}$. We remark that if we for a fixed $P$ take the fiber
$(\Sigma\cup\Theta)(P)=\{\lambda\,|\,(\lambda,P)\in\Sigma\cup\Theta\}$,
then we have $E_{(\Sigma\cup\Theta)(P)}(H(P))=\bbbone_\pp(H(P))$.

\begin{theorem}\label{thm:Hbnd}With $\cH_\bnd$ and $P^+_0(H)$ given as
  above, we have $\cH_\bnd=(1-P^+_0(H))\cH$.
\end{theorem}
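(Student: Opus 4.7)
The plan is to prove the two inclusions $\cH_\bnd \subseteq \ker P^+_0(H)$ and $\cH_\bnd^\perp \subseteq \Ran P^+_0(H)$ separately; since $P^+_0(H)$ is a projection by Theorem~\ref{thm:asympobs}, these together give the identity.

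For the first inclusion, the key observation is that $[p_\delta(\tfrac{x-y}{t})]$ commutes with $\uP$ (as $[D_x+D_y,x-y]=0$ on the one-particle sector), so $P^+_\delta(H)$ commutes with $\uP$ and fibers over the direct integral decomposition. Under $U$, multiplication by $x-y$ becomes multiplication by $-x$ on the one-particle sector, so, choosing $p$ radial by Remark~\ref{rem:asympobs},
\begin{equation*}
  \norm{[p_\delta(\tfrac{x-y}{t})]e^{-\ri tH}\Psi}^2 = \int_{\R^\nu}\norm{[p_\delta(\tfrac{x}{t})]e^{-\ri tH(P)}\hat\Psi(P)}^2\ud P,
\end{equation*}
where $\hat\Psi = U\Psi$. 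For $\Psi\in\cH_\bnd$ one has $\hat\Psi(P)\in\bbbone_\pp(H(P))\cK$ for a.e.\ $P$; expanding $\hat\Psi(P)$ in eigenvectors $\psi_n^{(P)}=(c_n^{(P)},\phi_n^{(P)})$, the eigenvalue phases cancel against the flanking isometry $e^{\ri tH(P)}$, so each term reduces to $\norm{p_\delta(\tfrac{x}{t})\phi_n^{(P)}}_{L^2}$, which tends to zero as $t\to\infty$ by dominated convergence (since $p_\delta(x/t)\to p_\delta(0)=0$ pointwise and $0\le p_\delta\le 1$). A density argument in each fiber and dominated convergence in $P$ then give $\norm{[p_\delta(\tfrac{x-y}{t})]e^{-\ri tH}\Psi}\to 0$, hence $P^+_\delta(H)\Psi=0$ for every $\delta>0$, and therefore $P^+_0(H)\Psi=0$.

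For the reverse inclusion, the strategy is to exploit Theorem~\ref{thm:minvelest}. By Proposition~\ref{prop:SigmaTheta}, $(\Sigma\cup\Theta)^c$ is open, and any $\Psi\in\cH_\bnd^\perp$ satisfies $\Psi = E_{(\Sigma\cup\Theta)^c}((\uP,H))\Psi$, so choosing $\chi_n\in C_0^\infty(\R^{\nu+1})$ with compact supports in $(\Sigma\cup\Theta)^c$ and $\chi_n\uparrow\bbbone_{(\Sigma\cup\Theta)^c}$ pointwise gives $\chi_n(\uP,H)\Psi\to\Psi$. It thus suffices to show $P^+_0(H)\chi(\uP,H)\Psi = \chi(\uP,H)\Psi$ for any such $\chi$. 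Covering $\supp\chi$ by finitely many neighbourhoods supplied by Theorem~\ref{thm:minvelest} and using a partition of unity reduces to the case where $\chi$ is supported in one such neighbourhood, yielding an associated $\epsilon>0$. Writing $\Psi_t := e^{-\ri tH}\chi(\uP,H)\Psi = (\Psi_t^0,\Psi_t^1)$, the minimal velocity estimate produces a sequence $t_k\to\infty$ along which both $\norm{\Psi_{t_k}^0}$ and $\norm{\bbbone_{[0,\epsilon]}(\tfrac{\abs{x-y}}{t_k})\Psi_{t_k}^1}$ tend to zero. Since $1-p_\delta$ is supported in $\{\abs{\,\cdot\,}\le\delta\}\subseteq\{\abs{\,\cdot\,}\le\epsilon\}$ for $\delta\le\epsilon$,
\begin{equation*}
  \inner{\Psi_{t_k}}{[p_\delta(\tfrac{x-y}{t_k})]\Psi_{t_k}}
  = \norm{\Psi_{t_k}^1}^2 - \inner{\Psi_{t_k}^1}{(1-p_\delta(\tfrac{x-y}{t_k}))\Psi_{t_k}^1}
  \ge \norm{\chi(\uP,H)\Psi}^2 - o(1).
\end{equation*}
Sending $k\to\infty$ and invoking the existence of $P^+_\delta(H)$ yields $\inner{\chi(\uP,H)\Psi}{P^+_\delta(H)\chi(\uP,H)\Psi} = \norm{\chi(\uP,H)\Psi}^2$; combined with $0\le P^+_\delta(H)\le I$, this forces $P^+_\delta(H)\chi(\uP,H)\Psi = \chi(\uP,H)\Psi$ for every $\delta\le\epsilon$, and letting $\delta\to 0$ completes the step.

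The main technical obstacle will be making the fiber computation in the first inclusion fully rigorous --- in particular handling the measurable structure of $P\mapsto\bbbone_\pp(H(P))\cK$ so that the fibrewise vanishing of the asymptotic observable on pure point subspaces integrates correctly against the direct integral. The partition-of-unity reduction in the second inclusion is standard.
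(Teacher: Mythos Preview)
Your proof is correct and follows essentially the same two-inclusion strategy as the paper: the minimal velocity estimate for $\cH_\bnd^\perp\subseteq\Ran P^+_0(H)$, and a fibrewise eigenvector argument for $\cH_\bnd\subseteq\ker P^+_0(H)$. The organisational differences (your partition-of-unity reduction versus the paper's direct use of $\psi\in E_N(\uP,H)\cH$) are cosmetic.

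The one point worth flagging is the technical obstacle you identify at the end. In fact it is not an obstacle at all for your argument: once you know $\hat\Psi(P)\in\bbbone_\pp(H(P))\cK$ a.e.\ (which is the definition of $\cH_\bnd$), the fibrewise limit $\norm{[p_\delta(x/t)]e^{-\ri tH(P)}\hat\Psi(P)}\to 0$ follows from the density argument you sketch, measurability of the integrand in $P$ is automatic from strong measurability of $P\mapsto e^{-\ri tH(P)}$, and dominated convergence (majorant $\norm{\hat\Psi(P)}^2$) finishes the job. No measurable selection of eigenvectors is needed. The paper takes a different route here: it invokes a descriptive set-theoretic fact that $\Sigma\cup\Theta$, being closed and fibrewise countable, decomposes as a countable union of Borel graphs $\Sigma_j$, and then works with $\psi=E_{\Sigma_j}(\uP,H)\phi$, whose fibres $\psi_P$ are genuine eigenvectors with Borel-measurable eigenvalue $\Sigma_j(P)$. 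This sidesteps the in-fibre approximation entirely, at the cost of citing an external result. Your direct approach is arguably more self-contained.

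One small wording issue: in the first inclusion you say the eigenvalue phases ``cancel against the flanking isometry $e^{\ri tH(P)}$'', but there is no such factor in $\norm{[p_\delta(x/t)]e^{-\ri tH(P)}\hat\Psi(P)}$; the phases simply disappear under the norm. This does not affect correctness.
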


\begin{proof}
  Let $(\lambda_0, P_0)\in\R^{\nu+1}\setminus (\Sigma\cup\Theta)$. Let the
  neighbourhood $N$ and $\epsilon>0$ be those of Theorem~\ref{thm:minvelest} corresponding to the point $(\lambda_0,P_0)$. Let $\psi\in
  E_N(\uP,H)\cH$. Then by Theorem~\ref{thm:minvelest}, there
  exists a sequence $t_n\to\infty$ such that
  \begin{align*}
    \psi&=e^{\ri t_n H}[p_\epsilon(\tfrac{x-y}{t_n})]e^{-\ri t_n
      H}\psi+e^{\ri t_n H}\begin{pmatrix} 1&0 \\ 0 &
      1-p_\epsilon(\tfrac{x-y}{t_n}) \end{pmatrix} e^{-\ri t_n
      H} \psi \to P^+_\epsilon(H)\psi+0,
  \end{align*}
  which implies that $\psi\in P^+_0(H)\cH$. As the span of such $\psi$ is
  dense in $\cH_\bnd^\perp$ and $P^+_0(H)\cH$ is closed, this implies that
  $\cH_\bnd\supset(1-P^+_0(H))\cH$.

  By Proposition~\ref{prop:SigmaTheta}, $\Sigma\cup\Theta$ may be
  written as an at most countable union of graphs $\Sigma_i$ of Borel
  functions from (subsets of) $\R^\nu$ to $\R$ (see
  \cite[Th\'e{}or\`e{}me 21, p. 226]{AnalSets}). Let
  $\phi=U\int^\oplus\phi_P\ud P\in\cH$. Then
  $\psi=E_{\Sigma_j}(\uP,H)\phi=U\int^\oplus
  E_{\Sigma_j(P)}(H)\phi_P\ud P$. This implies that $\psi$ can be
  written as
  \begin{equation*}
    \psi=U\int^\oplus\!\!\psi_P\ud P,
  \end{equation*}
  where $\psi_P$ is an eigenvector for $H(P)$ with eigenvalue
  $\Sigma_j(P)$. Note that this ensures that $\psi_P$ is Borel as a
  function of $P$. Now
  \begin{align*}
    P^+_\delta(H)\psi&=\slim_{t\to\infty}e^{\ri tH}[p_\delta(\tfrac{x-y}{t})]e^{-\ri tH}\psi\\
    &=\slim_{t\to\infty}U\int^\oplus
    e^{\ri tH(P)}[p_\delta(\tfrac{x}{t})]e^{-\ri tH(P)}\psi_P\ud P\\
    &=\slim_{t\to\infty}e^{\ri tH}U\int^\oplus
    [p_\delta(\tfrac{x}{t})]e^{-\ri t\Sigma_j(P)}\psi_P\ud P,
  \end{align*}
  where the last integrand goes pointwise to $0$ and hence by the
  dominated convergence theorem, the limit is $0$. As $\delta$ was
  arbitrary, this shows that $P^+_0(H)\psi=0$.

  Since the span of the set of $\psi$ we have covered is dense in
  $\cH_\bnd$ and $P^+_0(H)$ is closed, we conclude that
  $\cH_\bnd\subset(1-P^+_0(H))\cH$.
\end{proof}

\begin{theorem}[Existence of wave operators]\label{thm:waveop}
  The wave operator $W^+\colon
  \cH\mapsto \cH$ given by
  \begin{equation*}
    W^+=\slim_{t\to\infty}e^{\ri tH}e^{-\ri tH_0}P^+_0(H_0),
  \end{equation*}
  exists, where $P^+_0(H_0)$ is the projection onto $\{0\}\oplus
  L^2(\R^{2\nu})=\cH_{0,\bnd}^\perp$.
\end{theorem}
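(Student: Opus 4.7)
The plan is to combine the existence of $P_\delta^+(H_0,H)$ from Theorem~\ref{thm:asympobs} with the free-dynamics analogues of Theorems~\ref{thm:asympobs}, \ref{thm:minvelest} and \ref{thm:Hbnd}, via a three-$\epsilon$ Cauchy argument in $t$. The analogues for $H_0$ are obtained by running the very same proofs with $H$ replaced by $H_0$ throughout; this yields the existence of $P_\delta^+(H_0)$, of $P_0^+(H_0)=\slim_{\delta\to0}P_\delta^+(H_0)$, and the identification $\cH_{0,\bnd}=(1-P_0^+(H_0))\cH$. Since on the one-particle fibre $H_0(P)$ reduces to the multiplication operator $\Omega(P-D_x)+\omega(D_x)$, its spectrum there is purely absolutely continuous; the set $\Sigma_0\cup\Theta_0$ therefore picks up joint spectral mass of $(\uP,H_0)$ only in the vacuum sector, and consequently $\cH_{0,\bnd}=L^2(\R^\nu,\ud y)\oplus\{0\}$, so that $P_0^+(H_0)$ coincides with the projection onto $\{0\}\oplus L^2(\R^{2\nu})$ announced in the theorem.

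Granting this, fix $u\in P_0^+(H_0)\cH$ and $\epsilon>0$. Using the free-dynamics statements, first pick $\delta>0$ with $\norm{P_\delta^+(H_0)u-u}<\epsilon/6$, and then $T_1$ such that $\norm{e^{\ri tH_0}[p_\delta(\tfrac{x-y}{t})]e^{-\ri tH_0}u-P_\delta^+(H_0)u}<\epsilon/6$ for $t\ge T_1$; by unitarity of $e^{-\ri tH_0}$ these two bounds combine to give
\begin{equation*}
  \norm[\big]{\bigl(1-[p_\delta(\tfrac{x-y}{t})]\bigr)e^{-\ri tH_0}u}<\tfrac{\epsilon}{3},\qquad t\ge T_1.
\end{equation*}
By Theorem~\ref{thm:asympobs} applied to this fixed $\delta$, there is $T_2\ge T_1$ such that for all $s,t\ge T_2$ the vector $e^{\ri tH}[p_\delta(\tfrac{x-y}{t})]e^{-\ri tH_0}u-e^{\ri sH}[p_\delta(\tfrac{x-y}{s})]e^{-\ri sH_0}u$ has norm less than $\epsilon/3$. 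Writing $1=[p_\delta]+(1-[p_\delta])$ in $e^{\ri tH}e^{-\ri tH_0}u-e^{\ri sH}e^{-\ri sH_0}u$ and collecting the three estimates yields $\norm{e^{\ri tH}e^{-\ri tH_0}u-e^{\ri sH}e^{-\ri sH_0}u}<\epsilon$ for $s,t\ge T_2$. Hence $e^{\ri tH}e^{-\ri tH_0}u$ is strongly Cauchy and converges as $t\to\infty$, and boundedness of $P_0^+(H_0)$ extends the conclusion to arbitrary $u\in\cH$, giving the existence of $W^+$.

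The only step requiring real work beyond citing Theorem~\ref{thm:asympobs} is the preliminary establishment of the free-dynamics versions of Theorems~\ref{thm:asympobs}, \ref{thm:minvelest} and \ref{thm:Hbnd}; this is where I expect the main obstacle to sit. It is nevertheless a manageable one: with $V$ absent, all the $O(t^{-1-\mu})$ commutator-with-$V$ terms in those proofs disappear, the Mourre machinery collapses to a triviality off the threshold set $\Theta_0$, and the spectral decomposition of the decoupled $H_0(P)=\Omega(P)\oplus(\Omega(P-D_x)+\omega(D_x))$ is immediate, so the arguments go through essentially unchanged.
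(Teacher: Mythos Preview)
Your proof is correct and rests on the same two inputs as the paper: the existence of $P_\delta^+(H_0,H)$ from Theorem~\ref{thm:asympobs}, and the free-dynamics analogues yielding $P_\delta^+(H_0)\to P_0^+(H_0)$ together with the identification $P_0^+(H_0)=$ projection onto $\{0\}\oplus L^2(\R^{2\nu})$. The only difference is packaging: where you run an explicit three-$\epsilon$ Cauchy argument via the splitting $1=[p_\delta]+(1-[p_\delta])$, the paper factorizes $e^{\ri tH}[p_\delta(\tfrac{x-y}{t})]e^{-\ri tH_0}=(e^{\ri tH}e^{-\ri tH_0})(e^{\ri tH_0}[p_\delta(\tfrac{x-y}{t})]e^{-\ri tH_0})$, observes that the left-hand side converges to $P_\delta^+(H_0,H)$ while the second factor converges to $P_\delta^+(H_0)$, and concludes that $\slim_{t\to\infty}e^{\ri tH}e^{-\ri tH_0}$ exists on $\Ran P_\delta^+(H_0)$, hence on $\overline{\bigcup_{\delta>0}\Ran P_\delta^+(H_0)}=\Ran P_0^+(H_0)$. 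The paper's route is a line shorter and has the side benefit of making the identity $W^+P_\delta^+(H_0)=P_\delta^+(H_0,H)$ explicit, which feeds directly into the geometric asymptotic completeness argument; your Cauchy argument is the unfolded version of the same idea and is equally valid.
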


\begin{proof}
  From Theorem~\ref{thm:asympobs} and Theorem~\ref{thm:Hbnd} with $H=H_0$ it follows that $P^+_0(H_0)$
  can be given as in Theorem~\ref{thm:asympobs}, and by passing to the fibered
  representation, it is easy to see that the assumptions on $\Omega$
  and $\omega$ imply that $\cH_{0,\bnd}=L^2(\R^{\nu})\oplus\{0\}$.

  By Theorem~\ref{thm:asympobs},
  \begin{equation*}
    e^{\ri tH}[p_\delta(\tfrac{x-y}{t})]e^{-\ri tH_0}=e^{\ri tH}e^{-\ri tH_0}e^{\ri tH_0}[p_\delta(\tfrac{x-y}{t})]e^{-\ri tH_0}
  \end{equation*}
  tends strongly to $P^+_\delta(H_0,H)$ when $t\to\infty$.  On the other hand,
  \begin{equation*}
    e^{\ri tH_0}[p_\delta(\tfrac{x-y}{t})]e^{-\ri tH_0}
  \end{equation*}
  tends strongly to $P^+_\delta(H_0)$ in the same limit.  This implies
  that
  \begin{equation*}
    P^+_\delta(H_0,H)=\slim_{t\to\infty}(e^{\ri tH}e^{-\ri tH_0})P^+_\delta(H_0)
  \end{equation*}
  exists. As $\delta>0$ was arbitrary, the limit
  $\slim_{t\to\infty}(e^{\ri tH}e^{-\ri tH_0})$ exists on
  \begin{equation*}
  \bigcup_{\delta>0}\Ran P^+_\delta(H_0)
\end{equation*}
  and hence on
  $\overline{\bigcup_{\delta>0}\Ran P^+_\delta(H_0)}=\Ran P^+_0(H_0)$.
\end{proof}

\begin{remark}
  By the proof of Theorem~\ref{thm:waveop},
  \begin{equation*}
  P^+_0(H_0,H)=\slim_{\delta\to0}P^+_\delta(H_0,H)
\end{equation*}
exists. By a
  completely analogous argument, one may prove that also
  \begin{equation*}
  P^+_0(H,H_0)=\slim_{\delta\to0}P^+_\delta(H,H_0)
\end{equation*}
exists.
\end{remark}

\begin{theorem}[Geometric asymptotic completeness]\label{thm:geom}
  With $W^+$ as in Theorem~\ref{thm:waveop}, $\Ran W^+=P^+_0(H)\cH$.
\end{theorem}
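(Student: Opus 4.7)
The plan is to establish the two inclusions $\Ran W^+ \subset P^+_0(H)\cH$ and $P^+_0(H)\cH \subset \Ran W^+$ separately, using as a bridge the off-diagonal observables $P^+_\delta(H_0,H)$ and $P^+_\delta(H,H_0)$ constructed in Theorem~\ref{thm:asympobs}, together with the existence of their $\delta\to 0$ strong limits $P^+_0(H_0,H)$ and $P^+_0(H,H_0)$ recorded in the remark following Theorem~\ref{thm:waveop}. The core algebraic idea is that the product $e^{\ri tH}[p_\delta(\tfrac{x-y}{t})]e^{-\ri tH_0}$ admits two factorizations, either $(e^{\ri tH}[p_\delta(\tfrac{x-y}{t})]e^{-\ri tH})(e^{\ri tH}e^{-\ri tH_0})$ or $(e^{\ri tH}e^{-\ri tH_0})(e^{\ri tH_0}[p_\delta(\tfrac{x-y}{t})]e^{-\ri tH_0})$, and the asymptotics of each factor are known individually.

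For $\Ran W^+ \subset P^+_0(H)\cH$, I take $\phi \in \Ran P^+_0(H_0)$ and pass to $t\to\infty$ in both factorizations. Using that $e^{\ri tH}[p_\delta]e^{-\ri tH}$ and $e^{\ri tH_0}[p_\delta]e^{-\ri tH_0}$ converge strongly to $P^+_\delta(H)$ and $P^+_\delta(H_0)$, while $e^{\ri tH}e^{-\ri tH_0}$ is uniformly bounded and converges strongly to $W^+$ on $\Ran P^+_0(H_0)$, I obtain the intertwining
\begin{equation*}
P^+_\delta(H)W^+\phi \;=\; P^+_\delta(H_0,H)\phi \;=\; W^+ P^+_\delta(H_0)\phi.
\end{equation*}
Sending $\delta\to 0$ and invoking $P^+_0(H_0)\phi = \phi$ yields $P^+_0(H)W^+\phi = W^+\phi$, so $W^+\phi \in P^+_0(H)\cH$.

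For the reverse inclusion, given $\psi \in P^+_0(H)\cH$, I would produce an explicit preimage. Set $\phi_\delta := P^+_\delta(H,H_0)\psi$. The crucial structural observation is that the block form $[p_\delta(\tfrac{x-y}{t})] = \begin{pmatrix}0&0\\0&p_\delta(\tfrac{x-y}{t})\end{pmatrix}$ places each approximant $e^{\ri tH_0}[p_\delta(\tfrac{x-y}{t})]e^{-\ri tH}\psi$ into $\{0\}\oplus L^2(\R^{2\nu})$, which is invariant under $e^{\ri tH_0}$; hence $\phi_\delta$ lies in $\{0\}\oplus L^2(\R^{2\nu}) = \Ran P^+_0(H_0)$ (using the identification from the proof of Theorem~\ref{thm:waveop}). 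Applying the second factorization
\begin{equation*}
e^{\ri tH}[p_\delta(\tfrac{x-y}{t})]e^{-\ri tH}\psi \;=\; (e^{\ri tH}e^{-\ri tH_0})\bigl(e^{\ri tH_0}[p_\delta(\tfrac{x-y}{t})]e^{-\ri tH}\psi\bigr)
\end{equation*}
and letting $t\to\infty$, the inner factor tends to $\phi_\delta\in \Ran P^+_0(H_0)$, the outer to $W^+$ on that subspace, and the left-hand side to $P^+_\delta(H)\psi$, giving $P^+_\delta(H)\psi = W^+\phi_\delta$. Setting $\phi := P^+_0(H,H_0)\psi = \lim_{\delta\to 0}\phi_\delta \in \Ran P^+_0(H_0)$ and sending $\delta\to 0$ (with $P^+_0(H)\psi = \psi$) yields $\psi = W^+\phi$.

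The main technical point I expect to confront is justifying the interchange of limits whenever two only-strongly-convergent operator families are composed. In every instance above the situation is the same: a family $A_t$ that is uniformly bounded and converges strongly on $\Ran P^+_0(H_0)$ is applied to a family $v_t\to v_\infty$ with $v_\infty\in \Ran P^+_0(H_0)$, and the decomposition $A_t v_t = A_t(v_t-v_\infty) + A_t v_\infty$ settles the matter (first summand vanishes by boundedness of $A_t$, second converges by the known strong limit on the invariant subspace). The most delicate step is verifying that $\phi_\delta$ genuinely lands in $\Ran P^+_0(H_0)$ so that $e^{\ri tH}e^{-\ri tH_0}$ may legitimately be replaced by $W^+$ in the limit; this hinges entirely on the block-matrix structure of $[p_\delta]$ together with the fact that $e^{\ri tH_0}$ preserves the decomposition $\cH = L^2(\R^\nu)\oplus L^2(\R^{2\nu})$.
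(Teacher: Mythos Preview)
Your proposal is correct and follows essentially the same approach as the paper: both arguments hinge on the two factorizations of $e^{\ri tH}[p_\delta(\tfrac{x-y}{t})]e^{-\ri tH_0}$ (resp.\ $e^{\ri tH}[p_\delta(\tfrac{x-y}{t})]e^{-\ri tH}$) and the fact that the block form $[p_\delta]$ lands in $\{0\}\oplus L^2(\R^{2\nu})=\Ran P^+_0(H_0)$, so that $e^{\ri tH}e^{-\ri tH_0}$ may be replaced by $W^+$ in the limit. The only cosmetic difference is that the paper inserts a second $[p_\delta]$ factor (writing e.g.\ $e^{\ri tH}[p_\delta]e^{-\ri tH}\,e^{\ri tH}[p_\delta]e^{-\ri tH}$) to make the range property manifest, whereas you invoke that range property directly; your treatment of the limit interchange via the decomposition $A_t v_t = A_t(v_t-v_\infty)+A_t v_\infty$ is also more explicit than the paper's.
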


\begin{proof}
  Consider
  \begin{align}
    \MoveEqLeft
    e^{itH}e^{-itH_0}e^{itH_0}[P_\delta(\tfrac{x-y}{t})]e^{-itH_0}e^{itH_0}[P_\delta(\tfrac{x-y}{t})]e^{-itH_0}=\label{eq:geo3}\\
    &e^{itH}[P_\delta(\tfrac{x-y}{t})]e^{-itH}e^{itH}e^{-itH_0}e^{itH_0}[P_\delta(\tfrac{x-y}{t})]e^{-itH_0},\label{eq:geo4}
  \end{align}
  and observe that \eqref{eq:geo3} tends to $W^+$ and \eqref{eq:geo4}
  tends to $P_0^+(H)W^+$ in the limit $t\to\infty$, $\delta\to0$,
  which proves that $\Ran W^+\subset P^+_0(H)\cH$. For the other
  inclusion, we similarly compute
  \begin{align}
    \MoveEqLeft e^{itH}[P_\delta(\tfrac{x-y}{t})]e^{-itH}e^{itH}[P_\delta(\tfrac{x-y}{t})]e^{-itH}=\label{eq:geo1}\\
&e^{itH}e^{-itH_0}e^{itH_0}[P_\delta(\tfrac{x-y}{t})]e^{-itH_0}e^{itH_0}[P_\delta(\tfrac{x-y}{t})]e^{-itH}\label{eq:geo2}
  \end{align}
  and observe that \eqref{eq:geo1} tends to $P_0^+(H)$ while
  \eqref{eq:geo2} tends to $W^+P_0^+(H,H_0)$ in the same limit,
  which proves $\Ran P^+_0(H)\subset\Ran W^+$.
\end{proof}

Theorem~\ref{thm:main} now follows from
Proposition~\ref{prop:SigmaTheta}, Theorem~\ref{thm:Hbnd} and
Theorem~\ref{thm:geom}.

\section*{Acknowledgements}

The authors would like to thank J. Hoffmann-J\o{}rgensen for pointing
out the existence of \cite[Th\'e{}or\`e{}me 21,
p. 226]{AnalSets}). Morten Grud Rasmussen acknowledges the hospitality
of D\'epartement de Math\'ematiques, Universit\'e Paris-Sud and the
hospitality and support of The Erwin Schr\"odinger International
Institute for Mathematical Physics, Vienna, where part of this work
was done.

\appendix

\section{Lemmata related to propagation estimates}
\label{app:A}
\setcounter{section}{1} \renewcommand{\thesection}{\Alph{section}}
\setcounter{theorem}{0} 
For easy reference, we list the following
lemmata, which are taken from the appendix of \cite{DGe}. The first lemma
which is used to prove the propagation estimates, is a version of the
Putnam-Kato theorem developed by Sigal--Soffer \cite{SiSo}.

\begin{lemma}\label{lem:propest}
  Let $H$ be a self-adjoint operator and $\bD$ the corresponding
  Heisenberg derivative
  \begin{equation*}
    \bD=\frac{\ud}{\ud t}+[H,\ri \,\cdot\,].
  \end{equation*}
  Suppose that $\Phi(t)$ is a uniformly bounded family of self-adjoint
  operators. Suppose that there exist $C_0>0$ and operator valued
  functions $B(t)$ and $B_i(t)$, $i=1,\dotsc,n$, such that 
  \begin{equation*}
    \bD\Phi(t)\ge C_0B^*(t)B(t)-\sum_{i=1}^nB_i^*(t)B_i(t),
  \end{equation*}
  \begin{equation*}
    \int_1^\infty\norm{B_i(t)e^{-\ri tH}\phi}^2\ud t\le
    C\norm{\phi}^2,\quad i=1,\dotsc,n.
  \end{equation*}
  Then there exists $C_1$ such that 
  \begin{equation*}
    \int_1^\infty\norm{B(t)e^{-\ri tH}\phi}^2\ud t\le C_1\norm{\phi}^2.
  \end{equation*}
\end{lemma}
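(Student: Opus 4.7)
The plan is to integrate the differential inequality on $\bD\Phi(t)$ along the evolution, and exploit uniform boundedness of $\Phi(t)$ to control the boundary terms.

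First, for $\phi$ in a suitable dense subspace (e.g. $\cD(H)$, to ensure the formal manipulations below are justified; one then extends to all of $\cH$ by density and the uniform-in-$T$ estimates obtained), set
\begin{equation*}
  m(t) = \inner{e^{-\ri tH}\phi}{\Phi(t)\,e^{-\ri tH}\phi}.
\end{equation*}
Formally differentiating in $t$ and recognizing that the combined contribution of the two $H$-factors together with the explicit time-derivative of $\Phi$ is precisely $\bD\Phi(t)$, one obtains
\begin{equation*}
  \tfrac{d}{dt}m(t)=\inner{e^{-\ri tH}\phi}{\bD\Phi(t)\,e^{-\ri tH}\phi}.
\end{equation*}
The hypothesis on $\bD\Phi(t)$ then gives the pointwise inequality
\begin{equation*}
  \tfrac{d}{dt}m(t)\ge C_0\norm{B(t)e^{-\ri tH}\phi}^2-\sum_{i=1}^{n}\norm{B_i(t)e^{-\ri tH}\phi}^2.
\end{equation*}

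Next, I would integrate from $1$ to any $T>1$ and rearrange, obtaining
\begin{equation*}
  C_0\int_1^T\norm{B(t)e^{-\ri tH}\phi}^2\ud t\le m(T)-m(1)+\sum_{i=1}^n\int_1^T\norm{B_i(t)e^{-\ri tH}\phi}^2\ud t.
\end{equation*}
Since $\Phi(t)$ is uniformly bounded, $|m(T)-m(1)|\le 2\sup_{t\ge1}\norm{\Phi(t)}\,\norm{\phi}^2$, which is independent of $T$. By hypothesis, each of the $n$ integrals on the right is bounded by $C\norm{\phi}^2$ uniformly in $T$. Dividing by $C_0$ and sending $T\to\infty$ (the integrand on the left is non-negative, so monotone convergence applies) yields the desired bound with $C_1=C_0^{-1}\bigl(2\sup_{t}\norm{\Phi(t)}+nC\bigr)$.

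The only genuine subtlety is justifying the differentiation of $m(t)$, because $\bD\Phi(t)$ is a priori only a quadratic form and $\Phi(t)$ is only assumed self-adjoint and uniformly bounded (not norm-differentiable); one handles this by regularizing — for instance, replacing $e^{-\ri tH}\phi$ by $(1+\ri\varepsilon H)^{-1}e^{-\ri tH}\phi$ or applying the differential inequality directly at the form level on $\cD(H)$, then removing the regularization using the uniform bounds just derived and density of $\cD(H)$ in $\cH$. This technicality aside, the argument is essentially an application of the fundamental theorem of calculus, and no further machinery is needed.
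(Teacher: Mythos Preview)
Your argument is correct and is precisely the standard proof of this Putnam--Kato/Sigal--Soffer type lemma: integrate the form inequality for $\bD\Phi(t)$ along the evolution, bound the boundary terms by $\sup_t\norm{\Phi(t)}\,\norm{\phi}^2$, and absorb the $B_i$-terms using the hypothesis. The paper does not actually supply its own proof of this lemma; it is quoted in Appendix~\ref{app:A} from \cite{DGe} (ultimately going back to \cite{SiSo}), and the argument given there is exactly the one you outline.
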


The next lemma shows how to use propagation estimates to prove the
existence of asymptotic observables and is a version of Cook's method
due to Kato.

\begin{lemma}\label{lem:asympobs}
  Let $H_1$ and $H_2$ be two self-adjoint operators. Let $_2\bD_1$ be
  the corresponding asymmetric Heisenberg derivative:
  \begin{equation*}
    _2\bD_1\Phi(t)=\frac{\ud}{\ud t}\Phi(t)+\ri H_2\Phi(t)-\ri \Phi(t)H_1.
  \end{equation*}
Suppose that $\Phi(t)$ is a uniformly bounded function with values in
self-adjoint operators. Let $\cD_1\subset\cH$ be a dense
subspace. Assume that 
\begin{align*}
  \abs{\inner{\psi_2}{_2\bD_1\Phi(t)\psi_1}}&\le\sum_{i=1}^n\norm{B_{2i}(t)\psi_2}\norm{B_{1i}(t)\psi_1},\\
  \int_1^\infty\norm{B_{2i}(t)e^{-\ri tH_2}\phi}^2\ud
  t&\le\norm{\phi}^2,\quad\phi\in\cH,\, i=1,\dotsc,n,\\
  \int_1^\infty\norm{B_{1i}(t)e^{-\ri tH_1}\phi}^2\ud t&\le
  C\norm{\phi}^2,\quad\phi\in\cD_1,\, i=1,\dotsc,n.
\end{align*}
Then the limit
\begin{equation*}
  \slim_{t\to\infty}e^{\ri tH_2}\Phi(t)e^{-\ri tH_1}
\end{equation*}
exists.
\end{lemma}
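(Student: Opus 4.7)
The strategy is Cook's method in sesquilinear form: I verify a Cauchy criterion for $t\mapsto e^{\ri tH_2}\Phi(t)e^{-\ri tH_1}\psi_1$ on $\cD_1$ via the fundamental theorem of calculus, absorb the resulting time integral with the propagation estimates, and then extend by density using the uniform boundedness of $\Phi(t)$.

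Concretely, I fix $\psi_1\in\cD_1$ and $\psi_2\in\cH$ and consider the scalar function
\begin{equation*}
  F(t)=\inner{\psi_2}{e^{\ri tH_2}\Phi(t)e^{-\ri tH_1}\psi_1}
       =\inner{e^{-\ri tH_2}\psi_2}{\Phi(t)\,e^{-\ri tH_1}\psi_1}.
\end{equation*}
A formal differentiation gives $F'(t)=\inner{e^{-\ri tH_2}\psi_2}{({}_2\bD_1\Phi(t))\,e^{-\ri tH_1}\psi_1}$, which is precisely the form appearing in the first hypothesis. For $1\le T_1<T_2$, the fundamental theorem of calculus together with that hypothesis yields
\begin{equation*}
  \abs{F(T_2)-F(T_1)}\le\sum_{i=1}^n\int_{T_1}^{T_2}\norm{B_{2i}(t)e^{-\ri tH_2}\psi_2}\,\norm{B_{1i}(t)e^{-\ri tH_1}\psi_1}\ud t.
\end{equation*}

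Applying the Cauchy--Schwarz inequality in $t$ factorises each term into a $\psi_2$-piece and a $\psi_1$-piece. By the second hypothesis the $\psi_2$-piece is bounded by $\norm{\psi_2}$ uniformly in $T_1,T_2$, while by the third hypothesis (and $\psi_1\in\cD_1$) the $\psi_1$-piece
\begin{equation*}
  \Bigl(\int_{T_1}^{T_2}\norm{B_{1i}(t)e^{-\ri tH_1}\psi_1}^2\ud t\Bigr)^{1/2}
\end{equation*}
tends to $0$ as $T_1\to\infty$ uniformly in $T_2\ge T_1$. Taking the supremum over $\norm{\psi_2}\le 1$ shows that $\{e^{\ri tH_2}\Phi(t)e^{-\ri tH_1}\psi_1\}_{t\ge 1}$ is Cauchy in $\cH$, hence strongly convergent. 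The uniform bound $\sup_t\norm{e^{\ri tH_2}\Phi(t)e^{-\ri tH_1}}\le \sup_t\norm{\Phi(t)}<\infty$ and the $\epsilon/3$ argument then extend strong convergence from the dense subspace $\cD_1$ to all of $\cH$.

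The only non-routine point is the differentiation producing $F'(t)$: since $\Phi(t)$ is only assumed to be a uniformly bounded family of self-adjoint operators and ${}_2\bD_1\Phi(t)$ is a priori only a sesquilinear form, one cannot naively apply a product rule at the operator level. The clean way around this is to integrate the form identity, writing
\begin{equation*}
  F(T_2)-F(T_1)=\int_{T_1}^{T_2}\inner{e^{-\ri tH_2}\psi_2}{({}_2\bD_1\Phi(t))\,e^{-\ri tH_1}\psi_1}\ud t,
\end{equation*}
which is the version actually needed and which one obtains by approximating $\psi_1,\psi_2$ by vectors in $\cD(H_1)\cap\cD(H_2)$ (where the algebraic differentiation is legal), then passing to the limit using uniform boundedness of $\Phi(t)$ and the integrability of the right-hand side furnished by the hypotheses. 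Once this identity is in hand, the rest is the Cauchy--Schwarz and density argument above.
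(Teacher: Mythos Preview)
The paper does not supply its own proof of this lemma; it is listed in Appendix~A as a result quoted from \cite{DGe} and described there as ``a version of Cook's method due to Kato.'' Your argument is exactly that standard Cook-type proof: differentiate the weak matrix element, bound the asymmetric Heisenberg derivative by the form hypothesis, apply Cauchy--Schwarz in $t$ together with the two propagation estimates to get a Cauchy criterion on $\cD_1$, and extend by density using the uniform bound on $\Phi(t)$. This is correct and coincides with the argument in the cited reference. One cosmetic remark: in the justification of the integrated identity you only need $\psi_2\in\cD(H_2)$ and $\psi_1\in\cD(H_1)$ separately, not vectors in the intersection $\cD(H_1)\cap\cD(H_2)$.
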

The final lemma gives us the actual asymptotic observable.
\begin{lemma}\label{lem:proj}
  Let $Q_n$ be a commuting sequence of self-adjoint operators such
  that:
  \begin{equation*}
    0\le Q_n\le1,\quad Q_n\le Q_{n+1},\quad Q_{n+1}Q_n=Q_n.
  \end{equation*}
  Then the limit
  \begin{equation*}
    Q=\slim_{n\to\infty}Q_n
  \end{equation*}
  exists and is a projection.
\end{lemma}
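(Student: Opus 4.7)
The plan is to first establish the bootstrap identity $Q_m Q_n = Q_n$ for all $m \ge n$ by induction on $m$: the base case $m = n+1$ is assumed, and $Q_{m+1} Q_n = Q_{m+1} Q_m Q_n = Q_m Q_n = Q_n$ (using commutativity implicitly via $Q_{m+1}Q_m = Q_m$). This will be the algebraic workhorse of the argument.

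Next, I would estimate $(Q_m - Q_n)^2$ for $m \ge n$. Expanding, using commutativity and the identity just proved,
\begin{equation*}
  (Q_m - Q_n)^2 = Q_m^2 - 2 Q_n + Q_n^2.
\end{equation*}
Since $0 \le Q_k \le 1$ gives $Q_k^2 \le Q_k$ for each $k$, this is bounded above by $Q_m - 2Q_n + Q_n = Q_m - Q_n$. Therefore for any $\phi \in \cH$,
\begin{equation*}
  \norm{(Q_m - Q_n)\phi}^2 \le \inner{\phi}{(Q_m - Q_n)\phi}.
\end{equation*}
The right-hand side tends to zero as $n,m \to \infty$ with $m \ge n$, because $n \mapsto \inner{\phi}{Q_n \phi}$ is monotone nondecreasing and bounded above by $\norm{\phi}^2$, hence Cauchy. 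Thus $\{Q_n \phi\}$ is Cauchy in $\cH$, so the strong limit $Q = \slim_{n\to\infty} Q_n$ exists and satisfies $0 \le Q \le 1$, $Q = Q^*$.

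Finally, to see $Q$ is a projection, fix $n$ and take the strong limit in $m$ of $Q_m Q_n = Q_n$ to obtain $Q Q_n = Q_n$. Then letting $n \to \infty$ (strong convergence on the left factor combined with boundedness of $Q$) gives $Q^2 = Q$, as desired. The whole argument is essentially algebraic once the identity $Q_m Q_n = Q_n$ is in hand; the only mildly subtle point is the operator inequality $(Q_m - Q_n)^2 \le Q_m - Q_n$, which is the step that converts the monotone weak convergence of $\inner{\phi}{Q_n \phi}$ into strong convergence of $Q_n \phi$. I expect no genuine obstacles beyond keeping the order of strong limits straight in the final idempotency argument.
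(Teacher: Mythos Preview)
Your proof is correct. The paper does not actually supply its own proof of this lemma: it is listed in Appendix~\ref{app:A} as one of three auxiliary lemmata quoted without proof from the appendix of~\cite{DGe}. Your argument is the standard one: the identity $Q_m Q_n = Q_n$ for $m>n$ yields $(Q_m-Q_n)^2 \le Q_m - Q_n$, which upgrades the monotone (weak) convergence of $\inner{\phi}{Q_n\phi}$ to strong convergence of $Q_n\phi$, and the same identity passes to the limit to give $Q^2=Q$. One cosmetic point: your bootstrap identity is established (and used) only for $m>n$; the case $m=n$ would read $Q_n^2=Q_n$, which is not assumed and not needed.
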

\section{A commutator expansion formula}
\label{app:B}
\renewcommand{\ad}[2]{\adjungeret_{#1}^{#2}}
\setcounter{section}{2} 
\setcounter{theorem}{0}

In this section, we recall a result from \cite{MGR}.

In the following, $A=(A_1,\dotsc,A_\nu)$ is a vector of self-adjoint,
pairwise commuting operators acting on a Hilbert space $\cH$, and
$B\in\cB(\cH)$ is a bounded operator on $\cH$. We shall use the notion
of $B$ being of class $C^{n_0}(A)$ introduced in \cite{ABG}. For
notational convenience, we adopt the following convention: If $0\le
j\le\nu$, then $\kdj$ denotes the multi-index
$(0,\dotsc,0,1,0,\dotsc,0)$, where the $1$ is in the $j$'th entry.

\begin{definition}
  Let $n_0\in\N\cup\{\infty\}$. Assume that the multi-commutator form
  defined iteratively by $\ad{A}{0}(B)=B$ and
  $\ad{A}{\alpha}(B)=[\ad{A}{\alpha-\kdj}(B),A_j]$ as a form on
  $\cD(A_j)$, where $\alpha\ge\kdj$ is a multi-index and $1\le
  j\le\nu$, can be represented by a bounded operator also denoted by
  $\ad{A}{\alpha}(B)$, for all multi-indices $\alpha$,
  $\abs{\alpha}<n_0+1$. Then $B$ is said to be of class $C^{n_0}(A)$
  and we write $B\in C^{n_0}(A)$. 
\end{definition}

\begin{remark}
  The definition of $\ad{A}{\alpha}(B)$ does not depend on the order
  of the iteration since the $A_j$ are pairwise commuting. We call
  $\abs\alpha$ the \emph{degree} of $\ad{A}{\alpha}(B)$.
\end{remark}

In the following, $\cH_A^s := D(\abs{A}^s)$ for $s\ge0$ will be used
to denote the scale of spaces associated to $A$. For negative $s$, we
define $\cH_A^s := (\cH_A^{-s})^*$.

\begin{theorem}\label{thm:MGR}
  Assume that $B\in C^{n_0}(A)$ for some $n_0\ge n+1\ge 1$, $0\le
  t_1,t_2$, $t_1+t_2\le n+2$ and that $\{f_\lambda\}_{\lambda\in I}$
  satisfies 
  \begin{equation*}
    \label{eq:AAE}
    \forall\alpha\,\exists C_\alpha\colon\abs{\partial^\alpha
      f_\lambda(x)}\le C_\alpha\jnorm{x}^{s-\abs\alpha}
  \end{equation*}
  uniformly in $\lambda$ for some $s\in\R$ such that
  $t_1+t_2+s<n+1$. Then
  \begin{align*}
    [B,f_\lambda(A)]=\sum_{\abs\alpha=1}^n\frac{1}{\alpha!}\partial^\alpha
    f_\lambda(A)\,\ad{A}{\alpha}(B)+R_{\lambda,n}(A,B)
  \end{align*}
  as an identity on $\cD(\jnorm{A}^s)$, where
  $R_{\lambda,n}(A,B)\in\cB(\cH_A^{-t_2},\cH_A^{t_1})$ and there exist
  a constant $C$ independent of $A$, $B$ and $\lambda$ such that
  \begin{equation*}
    \norm{R_{\lambda,n}(A,B)}_{\cB(\cH_A^{-t_2},\cH_A^{t_1})}\le
    C\smashoperator{\sum_{\abs{\alpha}=n+1}}\norm{\ad{A}{\alpha}(B)}.
\end{equation*}
\end{theorem}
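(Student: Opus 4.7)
The approach is to represent $f_\lambda(A)$ by an almost analytic (Helffer--Sj\"ostrand) functional calculus, expand the commutator by iterating a resolvent commutator identity, and estimate the remainder in the stated weighted spaces. Because $A=(A_1,\dots,A_\nu)$ is a tuple of pairwise commuting self-adjoint operators, the joint functional calculus is well defined via the joint spectral measure; for symbols satisfying the hypothesis $|\partial^\alpha f_\lambda(x)|\le C_\alpha\jnorm{x}^{s-|\alpha|}$ one constructs, as in the scalar case, an almost analytic extension $\tilde f_\lambda\in C^\infty(\C^\nu)$ with $|\bar\partial\tilde f_\lambda(z)|\le C_N\jnorm{\Re z}^{s-N-1}|\Im z|^N$ for every $N$, and writes
\[
f_\lambda(A)=c_\nu\int\bar\partial\tilde f_\lambda(z)\prod_{j=1}^\nu(z_j-A_j)^{-1}\,d^{2\nu}z.
\]

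The first step is to commute $B$ past the product of resolvents. Using $[B,(z_j-A_j)^{-1}]=(z_j-A_j)^{-1}\ad{A}{\kdj}(B)(z_j-A_j)^{-1}$ together with the symmetric identity $(z_j-A_j)^{-1}C=C(z_j-A_j)^{-1}-(z_j-A_j)^{-1}\ad{A}{\kdj}(C)(z_j-A_j)^{-1}$, an induction on $|\alpha|$ produces, for each multi-index up to degree $n$, a main term proportional to $\ad{A}{\alpha}(B)\prod_j(z_j-A_j)^{-\alpha_j-1}$, plus a remainder whose only inner commutator is $\ad{A}{\beta}(B)$ with $|\beta|=n+1$ flanked by resolvents. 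Substituting into the Helffer--Sj\"ostrand integral and identifying $\frac{1}{\alpha!}\int\bar\partial\tilde f_\lambda(z)\prod_j(z_j-A_j)^{-\alpha_j-1}d^{2\nu}z$ with a constant multiple of $\frac{1}{\alpha!}\partial^\alpha f_\lambda(A)$ produces the claimed main terms, and packages the leftover $\bar\partial\tilde f_\lambda$-integral of the commutator-heavy piece as $R_{\lambda,n}(A,B)$.

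The main obstacle is the operator norm estimate of $R_{\lambda,n}(A,B)$ in $\cB(\cH_A^{-t_2},\cH_A^{t_1})$. I would sandwich the remainder between powers of $\jnorm{A}$ on either side, commute these weights inside using the same type of resolvent identity as in the expansion, and control the result by standard weighted spectral estimates of the form $\|\jnorm{A}^{t}(z-A)^{-1}\jnorm{A}^{-t}\|\le C|\Im z|^{-1}$ modified by factors of $\jnorm{\Re z}$. The integrand is then dominated by
\[
C\,\jnorm{\Re z}^{s+t_1+t_2-N-1}\,|\Im z|^{N-M}\,\|\ad{A}{\beta}(B)\|
\]
for some $M$ depending only on $n$ and $\nu$; choosing $N$ large, using that $\tilde f_\lambda$ is supported in a fixed strip so that $\Im z$ is bounded, and invoking the crucial hypothesis $t_1+t_2+s<n+1$ makes the $\Re z$-integral convergent, yielding the claimed bound by $C\sum_{|\alpha|=n+1}\|\ad{A}{\alpha}(B)\|$. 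The genuinely delicate point is ordering the weights and commutators so that each resolvent factor absorbs only the power of $\jnorm{\Re z}$ it can tolerate; this is what forces the sharp condition $t_1+t_2+s<n+1$ rather than a weaker one, and it is resolved by inserting auxiliary $(\ri-A_j)^{\pm 1}$ factors and applying the expansion identity once more inside the remainder.
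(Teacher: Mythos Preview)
The paper does not prove this theorem; it is stated in Appendix~B as a result ``recalled'' from \cite{MGR}, with no argument given. There is therefore nothing to compare your approach against in the present paper.

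Your strategy---almost analytic extension, iterated resolvent commutator expansion, weighted remainder estimate---is the standard route to such Taylor-like commutator formulas and is almost certainly what \cite{MGR} does. One technical point deserves care: the multivariable Helffer--Sj\"ostrand representation you wrote,
\[
f_\lambda(A)=c_\nu\int\bar\partial\tilde f_\lambda(z)\prod_{j=1}^\nu(z_j-A_j)^{-1}\,d^{2\nu}z,
\]
is not quite right as stated. A single $\bar\partial$ does not reproduce the function of $\nu$ commuting operators; one needs either to iterate the one-variable formula (yielding $\bar\partial_{z_1}\cdots\bar\partial_{z_\nu}\tilde f_\lambda$ against the product of resolvents) or to use a genuinely $\nu$-dimensional construction with an appropriate $\bar\partial$-closed integrand. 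The corresponding almost-analyticity bound must then control each $\bar\partial_{z_j}$ separately, which affects the bookkeeping of powers of $|\Im z_j|$ in the remainder estimate. Once this is formulated correctly the rest of your outline---in particular the counting argument that forces $t_1+t_2+s<n+1$---goes through, but as written the integral representation is a gap that would need to be repaired before the estimate can be carried out.
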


 \providecommand{\noopsort}[1]{}

\end{document}